\documentclass{elsarticle}

\usepackage{booktabs} % For formal tables

\usepackage[ruled]{algorithm2e} % For algorithms

\usepackage[english]{babel}
\usepackage{amsmath}
\usepackage{blindtext}
\usepackage{amsthm}
\usepackage{footnote}
\usepackage[normalem]{ulem}
\usepackage{enumitem}
\usepackage{amsfonts}
\usepackage{cancel}
\usepackage[toc,page]{appendix}

\SetAlFnt{\small}
\SetAlCapFnt{\small}
\SetAlCapNameFnt{\small}
\SetAlCapHSkip{0pt}
\IncMargin{-\parindent}
\newcommand{\E}{\mathbb{E}}

\newcommand{\GG}{\mathcal{G}}
\renewcommand{\P}{\mathbb{P}}
\newtheorem{proposition}{Proposition}
\newtheorem{theorem}{Theorem}
\newtheorem{remark}{Remark}
\newtheorem{corollary}{Corollary}

\usepackage{subcaption}
\usepackage[normalem]{ulem}
\usepackage{url}

%\settopmatter{printacmref=false}
\usepackage[english]{babel}
\usepackage{blindtext}

\usepackage{etoolbox}
\makeatletter
\patchcmd{\maketitle}{\@copyrightspace}{}{}{}
\makeatother

% Metadata Information
%\setcopyright{acmcopyright}
%\acmJournal{POMACS}
%\acmYear{2019} \acmVolume{3} \acmNumber{2} \acmArticle{35} \acmMonth{6} \acmPrice{15.00}\acmDOI{10.1145/3326150}

% DOI

% Paper history

% Document starts
%\setcopyright{none}
%\settopmatter{printacmref=false} % Removes citation information below abstract
%\renewcommand\footnotetextcopyrightpermission[1]{} % removes footnote with conference %information in first column
%\pagestyle{plain} % removes running headers
%\settopmatter{printacmref=false} % Removes citation information below abstract
%\renewcommand\footnotetextcopyrightpermission[1]{} % removes footnote with conference %information in first column
%\pagestyle{plain} % removes running headers
\begin{document}

%\begin{CCSXML}
%<ccs2012>
%<concept>
%<concept_id>10002950.10003648</concept_id>
%<concept_desc>Mathematics of computing~Probability and statistics</concept_desc>
%<concept_significance>300</concept_significance>
%</concept>
%<concept>
%<concept_id>10003033.10003079.10003080</concept_id>
%<concept_desc>Networks~Network performance modeling</concept_desc>
%<concept_significance>300</concept_significance>
%</concept>
%</ccs2012>
%\end{CCSXML}

%\ccsdesc[300]{Mathematics of computing~Probability and statistics}
%\ccsdesc[300]{Networks~Network performance modeling}

%\keywords{Memory; Power-of-d-choices; Large Scale Computer Network; Load Balancing}

%\setcopyright{acmcopyright}
%\acmJournal{POMACS}
%\acmYear{2019} \acmVolume{3} \acmNumber{2} \acmArticle{35} \acmMonth{6} \acmPrice{15.00}%\acmDOI{10.1145/3326150}

% Title portion. Note the short title for running heads
\title{Performance Analysis of Load Balancing Policies with Memory}

\author[1]{Tim Hellemans\corref{cor1}%
\fnref{fn1}}
\ead{timhellemanstim@gmail.com}
\author[1]{Benny Van Houdt}
\cortext[cor1]{Corresponding author}
\address[1]{University of Antwerp, Middelheimlaan 1, Antwerp}

\begin{abstract}
Joining the shortest or least loaded queue among $d$ randomly selected queues are two 
fundamental load balancing policies. Under both policies the dispatcher does not maintain any 
information on the queue length or load of the servers. In this paper we analyze the
performance of these policies when the dispatcher has some memory available to store
the ids of some of the idle servers. We consider methods where the dispatcher
discovers idle servers as well as methods where idle servers inform the dispatcher 
about their state.
 
We focus on large-scale systems and our analysis uses the cavity method. The main insight
provided is that the performance measures obtained via the cavity method for a load
balancing policy {\it with} memory
reduce to the performance measures for the same policy {\it without} memory provided that
the arrival rate is properly scaled. Thus, we can study the performance of load balancers
with memory in the same manner as load balancers without memory.  In particular this entails closed form solutions for joining the shortest or least loaded queue among $d$ randomly selected queues
with memory in case of exponential job sizes. Moreover, we obtain a simple closed form expression for the (scaled) expected waiting time as the system tends towards instability.

We present simulation results that support our belief that the approximation obtained by
the cavity method becomes exact as the number of servers tends to infinity. 
\end{abstract}
\maketitle

\section{Introduction}
Load balancing is often used in large-scale clusters to reduce latency. A simple algorithm, denoted by SQ($d$), exists in assigning incoming jobs to a server that currently holds the least number of jobs out of $d$ randomly selected queues. This is referred to as the \textit{power-of-$d$-choices} algorithm \cite{aghajani2017pde, mitzenmacher2, vvedenskaya3}. Another popular algorithm which has received quite some attention recently exists in assigning an incoming job to the server which is the least loaded amongst $d$ randomly selected queues, i.e.~the server which is able to start working on the job first receives the job. This policy is referred to as LL($d$) and has been studied in e.g.~\cite{hellemans2018power,bramsonLB_QUESTA,Sparrow,mitzenmacher2019supermarket}.

The main objective of this paper is to generalize the analysis of the SQ($d$) and LL($d$) policy to the case where the dispatcher has some (finite or infinite) memory available to store 
the ids of idle servers. These ids may be discovered by either probing servers to check whether they are idle or servers may inform the dispatcher that they became idle. We focus on the performance
of large scale systems and as such make use of the cavity method introduced in \cite{bramsonLB}. The
cavity method relies on the assumption that the queue length (or load) of any finite set of queues
becomes independent as the number of servers tends to infinity, called the {\it ansatz}. 

The ansatz was proven in some particular cases, in~\cite{bramsonLB_QUESTA} it was shown for LL($d$) with general job sizes and SQ($d$) with decreasing hazard rate job sizes. Recently, the ansatz was also proven a variety of load balancing policies which are similar to LL($d$) (see \cite{shneer2020large}). Our objective is not to prove the ansatz for load balancers with memory, but to study these policies using the cavity method. To demonstrate the usefulness
of our analysis we present simulation results which suggest that the cavity method captures the system behavior as the number of servers tends to infinity.

A few papers have previously studied the use of some (bounded) memory at the dispatcher in combination with a power-of-$d$ policy. In \cite{mitzenmacher2002load} the authors study a policy with a memory of size $m$, where at every job arrival $d$ servers are probed and the job is assigned to the server with the smallest number of pending jobs amongst the $d$ probed
servers and $m$ servers in memory. The $m$ servers with the shortest queue amongst the remaining $d+m-1$ servers form the memory for the next job arrival. In \cite{gamarnik2020lower} the authors study the amount of memory resp.~probes required in order to obtain vanishing queueing delay. In \cite{van2019hyper, anselmi2020power}  policies are studied where the dispatcher maintains an upper bound on the queue length
of each server and dispatches jobs based on these upper bounds.

The main insight obtained in this paper is that studying a load balancing policy with memory
using the cavity method, corresponds to studying the same load balancing policy without memory 
if we scale down the arrival rate in a proper manner (see also Theorem \ref{thm:response_SQd}, \ref{thm:response_SQd_PHD}, \ref{thm:response_SQd_gen} and \ref{thm:response_LLd}). For the LL($d$) policy, we do not impose any restrictions on the job size distribution. For the SQ($d$) policy, we initially restrict our attention to exponential job sizes and then generalize our main result to phase type and general job size
distributions.

As a by-product, our results allow us to study the Join-Idle-Queue policy (denoted by JIQ) 
with finite memory. JIQ  exists in keeping track of the ids of the idle queues and assigning 
incoming jobs to an idle queue whenever there is an idle server in memory and simply assigning it to a random server otherwise. This policy has vanishing delays when the number of servers tends to infinity in case of infinite memory \cite{lu2011join, stolyar1, foss2017large, braverman2018steady}. 

Apart from the cavity method analysis, we additionaly present explicit results for
the heavy traffic limit by relying on the framework in \cite{hellemans2020heavy} that allows one to compute the limit $\lim_{\lambda \rightarrow 1^-} - \frac{\mathbb{E}[R_\lambda]}{\log(1-\lambda)}$ for load balancing policies with exponential job sizes. 
We show that (unsurprisingly) for most memory schemes, the heavy traffic limit remains unchanged when we add memory at the dispatcher. However, when the dispatcher has a memory of size $A$ and servers inform the dispatcher when they become idle, the heavy traffic limit is multiplied by $\frac{1}{A+1}$ for both the LL($d$) and SQ($d$) policy. In particular with a memory size of $1$, the response time under heavy traffic is halved compared to having no memory at the dispatcher.

Finally, we analyze the low traffic limit in case of exponential job sizes. In particular, we take a closer look at the ratio of the mean waiting time for two different load balancing policies as the load tends to zero, for which we find a simple closed form solution.

The paper is structured as follows. In Section \ref{sec:model_description}, the model is introduced and we shortly review previously obtained results for SQ($d$) and LL($d$). In Section \ref{sec:examples} we present four approaches which make use of memory at the dispatcher and show how to obtain the probability that the memory is empty for each of these methods. Next we present our major analytical tool, the queue at the cavity in Section \ref{sec:cavity} and we describe how it is defined for the memory dependent LL($d$) and SQ($d$) policy. We carry out the analysis of the queue at the cavity in Section \ref{sec:analysis}. Our analysis is verified by means of simulation in Section \ref{sec:finAcc}. In Section \ref{sec:num_examples} we show how our results may be used for numerical experimentation by studying one specific setting. In Section \ref{sec:heavy} we study the heavy traffic limit, while in Section \ref{sec:low} the low traffic limit is considered. We conclude the paper in Section \ref{sec:conclusions} and
discuss possible future work.

All code used to generate Table \ref{table:fin_acc} and Figure \ref{fig:num_exp_SQd} can be found at\newline \textit{https://github.com/THellemans/memoryDependentLB.git}.

\section{Model Description}\label{sec:model_description}
We consider a system consisting of $N$ identical servers (with $N$ large). There is some central dispatcher to which jobs arrive according to a Poisson($\lambda N$) process. The dispatcher has some
(finite or infinite) memory available to store ids of idle servers. When a job arrives and the dispatcher has the id(s) of some idle server(s) in its memory, the job is dispatched to 
a random server, the id of which is in memory. If the dispatcher's memory is empty, $d$ servers are chosen at random and the job is either send to the server with the shortest queue (SQ($d$), see Section \ref{sec:SQd_explained}) or to the server with the least amount of work (LL($d$), see Section \ref{sec:LLd_explained}). Setting $d=1$ yields the JIQ policy where the job is simply routed arbitrarily whenever there are no idle servers known by the dispatcher. Before we proceed we provide some further details on the classic SQ($d$) and LL($d$) policy.

%Our analysis shows that the cavity process of the above model turns out to be equivalent to the cavity process of a model with two arrival streams (see also Section \ref{sec:finAcc}). Let $p \in (0,1)$ be the probability that the memory at the dispatcher is empty. We have one Poisson($\lambda (1-p) N$) arrival stream which assigns a job to an arbitrary idle queue (which is always possible for a stable system when using the cavity process). The second arrival stream is Poisson($\lambda p N$), these jobs are assigned using the SQ($d$) resp.~LL($d$) policy.  

\subsection{Classic SQ($d$)} \label{sec:SQd_explained}
The SQ($d$) policy was first introduced in \cite{mitzenmacher2, vvedenskaya3} for a system with
Poisson($\lambda N$) arrivals and exponential job sizes (with mean $1/\mu$). Whenever a job arrives, 
$d$ servers are probed at random and the incoming job is routed to the probed server with the least number of jobs in its queue. It was shown (see \cite{mitzenmacher2}) that in the limit as $N \rightarrow \infty$
the system behavior converges to the solution of the following set of ODEs:
$$
\frac{d}{dt} u_k(t)
=
\lambda (u_{k-1}(t)^d-u_k(t)^d) - \mu(u_k(t)-u_{k+1}(t)),
$$
where we denote by $u_k(t)$ the probability that, at time $t$, an arbitrary server has at least $k$ jobs in its queue and $u_0(t)=1$. This set of ODEs also corresponds to applying the cavity method to the SQ($d$) policy. The fixed point of this set of ODEs obeys a simple recursive formula:
\begin{equation}\label{eq:SQd_vanilla}
\mu u_{k+1}= \lambda u_k^d,
\end{equation}
which has the closed form solution $u_k = \rho^{\frac{d^k-1}{d-1}}$
with $\rho = \lambda/\mu$. In particular one obtains from Little's Law the closed form solution of the expected response time:
\begin{equation}\label{eq:ER_SQd_vanilla}
\E[R]= \frac{1}{\lambda} \sum_{k=1}^\infty \rho^{\frac{d^k-1}{d-1}}.
\end{equation}

\subsection{Classic LL($d$)} \label{sec:LLd_explained}
The LL($d$) policy was analyzed in \cite{hellemans2018power} for a system with arbitrary job sizes
with mean $\E[G]$
using the cavity method. Whenever a job arrives, $d$ queues are probed and the job is sent to the queue which has the least amount of work left. This means that the job joins the queue at which its service can start the soonest. In practice this can be implemented through late binding, see also \cite{Sparrow}. 
Let $\bar F(w)$ denote the equilibrium probability that an arbitrary queue has at least $w$ work left
using the cavity method. 
It is shown in \cite{hellemans2018power} that $\bar F(w)$ satisfies the fixed point equation:
\begin{equation} \label{eq:FbarW_classicSQd_IDE}
\bar F(w)
=
\rho - \lambda \int_0^w (1-\bar F(u)^d) \bar G(w-u) \, du,
\end{equation}
with $\rho = \lambda \E[G]$ and $\bar G(w-u)$ the probability that a job has a size greater than $w-u$.
This fixed point equation can alternatively be written as the following Integro Differential Equation (IDE):
$$
\bar F'(w)
=
-\lambda \bigg[
\bar G(w) - \bar F(w)^d + \int_0^w \bar F(u)^d g(w-u) \, du
\bigg],
$$
with $g$ the density function of the job size distribution.
Both have the boundary condition $\bar F(0) = \rho$. Moreover, this equation has a closed form solution in case of exponential job sizes (with mean $1/\mu$):
\begin{equation} \label{eq:FbarW_classicSQd_closed}
\bar F(w)=(\rho + (\rho^{1-d}-\rho) e^{(d-1)w})^{\frac{1}{1-d}}.
\end{equation}
In particular, one obtains a closed form solution for the expected response time:
\begin{equation} \label{eq:ER_LLd_vanilla}
\E[R]= \frac{1}{\lambda} \sum_{n=0}^\infty \frac{\rho^{dn+1}}{1+n\cdot (d-1)}.
\end{equation}

\section{Discovering idle servers} \label{sec:examples}
We now discuss a number of approaches for the dispatcher to discover ids of idle servers.
In the first few approaches the dispatcher discovers idle servers by probing, while in the
last approach the idle servers identify themselves to the dispatcher.  Note that as the amount of incoming work per server per unit of time is equal to $\rho<1$, no work is replicated, and all servers are identical, it follows that the steady state probability that a server is busy is given by 
$\rho$.\footnote{For SQ($d$) with exponential job sizes this is shown explicitly in the proof of Theorem \ref{thm:closed_SQd}, while for SQ($d$) with general job sizes the proof is carried out in Proposition \ref{prop:u1_eq_rho_SQd_gen}. For LL($d$) this easily follows from integrating both sides of \eqref{eq:IDE_LLd}.}

\subsection{Interrupted probing (IP)}
In the first approach, called {\it interrupted} probing (IP), the dispatcher probes $d$ servers when its memory is empty upon a job arrival. If there are $k \geq 1$ idle servers among the $d$ probed servers, 
it sends the incoming job to one of the idle servers and stores ids of the $k-1$ other
servers in memory. These $k-1$ ids are then used for the subsequent $k-1$ arrivals. 
Thus for these $k-1$ arrivals, the dispatcher does not probe any servers. 
As $\rho$ is the steady state probability that a server is busy, we can find the probability
$\pi_0$ of having no ids in memory when a new job arrives by looking at the Markov chain with state space ${0,\ldots,d-1}$ and transition probability matrix $M(\rho)$:
\begin{align*}
M(\rho)_{0,0}&= \rho^d + \binom{d}{1} \rho^{d-1} (1-\rho),\\
M(\rho)_{0,\ell}&= \binom{d}{\ell+1} \rho^{d-1-\ell} (1-\rho)^{\ell+1},\\
M(\rho)_{k,k-1}&=1,\\
\end{align*}
and $M(\rho)_{k,\ell}=0$ otherwise.

As only the first row is non-trivial, it is not hard to check that $\pi=(\pi_0,\ldots,\pi_{d-1})$
given by:
$$
\pi_k=\pi_0 \left[
1- \sum_{j=0}^k \binom{d}{j} \rho^{d-j} (1-\rho)^j
\right],
$$
for $k \geq 1$ is an invariant vector of $M(\rho)$.
From the requirement $\sum_{k=0}^{d-1} \pi_k=1$ it then follows that 
\begin{align}\label{eq:pi0IP}
\pi_0 = \frac{1}{\rho^d + (1-\rho)d}.
\end{align}

The number of probes used per arrival is clearly given by $\pi_0 d$ which equals
$$
\frac{1}{1-\rho+\frac{\rho^d}{d}}.
$$
The main advantage of the IP approach is that it uses far less than $d$ probes per arrival 
when $\rho$ is not too large (see also Figure \ref{fig:num_probes}).

\subsection{Continuous probing (CP)} \label{sec:probe_lot_memory}
This approach is similar to the IP approach, except that whenever we use a server id
from memory for a job arrival, the dispatcher still probes $d$ random servers. The ids
of the $d$ servers that are idle are subsequently added to memory. We assume that 
the available memory is unlimited. 

In order to determine the probability $\pi_0$ of having a server id in memory,
we need to study a Markov chain on an infinite state space.
Its transition probability matrix $M(\rho)$ has the following form:
\begin{align*}
M(\rho)_{0,0}&= \rho^d + \binom{d}{1} \rho^{d-1} (1-\rho),\\
M(\rho)_{0,\ell}&= \binom{d}{\ell+1} \rho^{d-1-\ell} (1-\rho)^{\ell+1},
\end{align*}
for $1\leq \ell \leq d-1$. For $k\geq 1$, we have 
\begin{align*}
M(\rho)_{k,k-1+\ell}
&=\binom{d}{\ell} \rho^{d-\ell} (1-\rho)^{\ell},
\end{align*}
for $k-1\leq \ell <d+k$, and $M(\rho)_{k,\ell}=0$ otherwise.
First note that if $d>\frac{1}{1-\rho}$, this Markov chain is transient
as the drift in state $k > 0$ is given by $d(1-\rho)-1$,
meaning after some point in time the chain never returns to state zero and all
incoming arrivals can be assigned to an idle server.
When $d < \frac{1}{1-\rho}$, the chain is positive recurrent and we need to 
determine $\pi_0 < 1$. A similar observation was made in \cite{van2019hyper, anselmi2020power}.

The average time the memory remains empty is equal to:
$$
\frac{1}{1-M(\rho)_{0,0}}=\frac{1}{1- \rho^d - d \rho^{d-1} (1-\rho)}.
$$
Furthermore, when the memory becomes non-empty, the length that it remains non-empty
depends on the number of server ids that are placed into memory. More specifically let $\E[T_{k,0}]$ denote 
the expected first return time to $0$ given that the chain starts in state $k > 0$,
then:
$$
\E[T_{k,0}] = k \E[T_{1,0}],
$$
and the mean time that the Markov chain stays away from state $0$ given that
it just made a jump from state $0$ to some state $k > 0$ is given by $\E[X_0]\E[T_{1,0}]$,
where $\E[X_0]$ is one less than the mean number of idle servers among $d$ servers given that
at least $2$ are idle. It is not hard to see that 
$$\E[X_0]=\frac{d(1-\rho)-(1-\rho^d)}{1- \rho^d - d \rho^{d-1} (1-\rho)}.$$
Further, $\E[T_{1,0}]=\frac{1}{1-d(1-\rho)}$ as $\E[T_{1,0}] = 1 + d(1-\rho) \E[T_{1,0}]$. 
Putting this together we obtain
\begin{align}
\pi_0&=\frac{1-d(1-\rho)}{\rho^d},\label{eq:pi_0_lots_of_mem}
\end{align}
when $d < \frac{1}{1-\rho}$.
Note that the CP approach uses $d$ probes per arrival.

\subsection{Bounded Continuous Probing (BCP)}
This approach is identical to the CP approach, but with finite memory size $A$. 
Hence the transition matrix $M(\rho)$ is of size $A+1$ and its transitions are the same as in Section \ref{sec:probe_lot_memory}, except that any transition from a state $k \leq A$
to a state $\ell > A$ becomes a transition to state $A$. In particular for $k > A-d+1$ we have:
$$
M(\rho)_{k,A}
=
\sum_{j=A-k+1}^d \binom{d}{j} \rho^{d-j} (1-\rho)^{j},
$$
and for all other values, $M(\rho)$ coincides with the expressions given in Section \ref{sec:probe_lot_memory}. This Markov chain does not appear to have a simple closed form solution for arbitrary values of $d$, however for $d=2$ one finds: 
$$
\pi_0=\frac{1-\left(\frac{1-\rho}{\rho}\right)^2}{1-\left(\frac{1-\rho}{\rho}\right)^{2(A+1)}}.
$$
For $d>2$ a simple numerical scheme can be used to compute $\pi_0$. 
Note that this approach uses $d$ probes per arrival unless the dispatcher sends the probes one at a
time and stops probing when the memory is full.

\subsection{Other probing schemes}
In this section we present a result that applies to any scheme where the dispatcher 
discovers idle servers by probing and any idle server that is discovered is stored in memory.
Thus the result only applies to BCP if the probes are sent one at a time.
 
\begin{proposition}\label{prop:nr_probes}
Assume all discovered idle servers are stored in memory. Then for 
any LL($d$)/SQ($d$) memory based policy, the average number of probes used 
per arrival is given by:
\begin{equation}
\frac{1-\pi_0 \rho^d}{1-\rho}.
\label{eq:nr_probes}
\end{equation}
\end{proposition}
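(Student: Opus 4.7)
The plan is to prove the formula by a steady-state flow balance on the memory, equating the expected number of idle server ids added per arrival to the expected number removed per arrival.

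First I would set up the balance. Let $E[P]$ denote the expected number of probes per arrival. Since each probe targets a uniformly random server, and every server is busy in steady state with probability $\rho$, the cavity-method independence of the sampled queues gives that each probe returns an idle server with probability $1-\rho$, independently of the state of the memory and of the other probes. Because the assumption of the proposition is that every idle server discovered by a probe is stored in memory, the expected number of ids added to memory per arrival is exactly $E[P](1-\rho)$.

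Next I would count removals. At most one id is removed per arrival, and an id is removed precisely when the arriving job is assigned to an idle server (either one picked directly from memory, or one inserted into memory by the current arrival's probes and then immediately popped). The only way an arrival fails to be routed to an idle server is if the memory is empty upon arrival and, in addition, all of the $d$ servers probed under the LL($d$)/SQ($d$) fallback rule are busy: when the memory is empty the scheme is forced to probe (at least) $d$ servers in order to apply the LL/SQ tiebreak, and if any of those $d$ is idle then by the hypothesis it is stored in memory and the job is routed there. Invoking cavity independence again, this joint event has probability $\pi_0 \rho^d$, so the expected number of ids removed per arrival equals $1 - \pi_0 \rho^d$.

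Finally, equating adds and removes in steady state gives $E[P](1-\rho) = 1 - \pi_0 \rho^d$, which rearranges to \eqref{eq:nr_probes}. The main conceptual hurdle is justifying the two independence claims used above under the cavity ansatz, namely that (i) the busy/idle statuses of the $d$ probed servers are mutually independent Bernoulli($\rho$), and (ii) these statuses are independent of the event that the memory is empty; both are standard consequences of the large-$N$ asymptotic independence of any finite collection of queues that underlies the cavity method, and I would cite this explicitly rather than re-derive it. A minor bookkeeping point is to note that the argument does not rely on the dispatcher probing only when the memory is empty: in schemes such as CP the dispatcher may probe even while the memory is non-empty, but in that case the job is still served from memory and the $E[P](1-\rho)$ count on the add side already accounts for the extra probes.
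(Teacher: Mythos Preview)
Your proof is correct, and it follows a genuinely different line than the paper's.

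The paper argues by \emph{probe attribution}: it reinterprets the probes as being sent one at a time and then charges each arrival with the probes that ``produced'' the idle id it consumes. An arrival with empty memory is charged the expected number of sequential probes until the first idle or until $d$ probes have been sent, namely $\sum_{k=0}^{d-1}\rho^k=(1-\rho^d)/(1-\rho)$; an arrival that draws an id from memory is charged the geometric cost $1/(1-\rho)$ of the probe that discovered that id. Summing with weights $\pi_0$ and $1-\pi_0$ gives \eqref{eq:nr_probes}.

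You instead run a \emph{flow balance} on the memory contents: expected ids inserted per arrival equals $E[P](1-\rho)$, expected ids removed equals the probability the job lands on an idle server, which is $1-\pi_0\rho^d$; equating the two gives \eqref{eq:nr_probes} directly. This is cleaner and avoids the somewhat delicate bookkeeping of which failed probes get charged to which arrival in the paper's argument. What the paper's decomposition buys is an explicit split of the probe cost into an ``empty-memory'' part and a ``stored-id'' part, which can be read off separately; your argument gives only the total.

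One small wording issue: your phrase ``forced to probe (at least) $d$ servers'' is slightly off. For the removal count $1-\pi_0\rho^d$ to be correct you need that, conditional on empty memory, the job is routed to an idle server exactly when at least one of $d$ probed servers is idle. Schemes that stop probing early upon finding an idle server are fine (they probe fewer than $d$, but the routing event is still $\{\text{not all }d\text{ busy}\}$), but a scheme that probed strictly more than $d$ servers on an empty memory would change the exponent. This is consistent with the paper's setting, so the argument goes through; just tighten the wording.
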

\begin{proof}
If we think of the probes being transmitted one at a time
and assigning the job as soon as an idle server is discovered, the dispatcher uses on average
$\sum_{k=0}^{d-1} \rho^k$ probes for any job arrival that occurs when the memory is
empty. Further, for any arrival that uses an id in memory, an average of $1/(1-\rho)$ probes was
used to discover that id. Hence, the average number of probes transmitted can be written as:
$$
\pi_0 \frac{1-\rho^d}{1-\rho}+ (1-\pi_0) \frac{1}{1-\rho}.
$$
\end{proof}

The above result indicates that for any such policy for which we either know the 
average number of probed queues (as for CP) or can express this using $\pi_0$
(as for IP), we immediately obtain $\pi_0$. 
As the CP policy sends $d$ probes per arrival and the IP policy $\pi_0 d$,  
Proposition  \ref{prop:nr_probes} yields 
\eqref{eq:pi0IP} and \eqref{eq:pi_0_lots_of_mem} without the need to analyze a 
Markov chain.

%\subsection{No memory}
%The easiest application is one where we keep no memory at the dispatcher, each arrival is routed to the shortest resp.~least loaded queue. For this case $(X_n)_n$ only has one state namely zero. It obviously follows that $\pi_0=1$ and we recover the results from \cite{mitzenmacher2} and \cite{hellemans2018power}. The number of probes used per arrival reduces to:
%$$
%\frac{1-\rho^d}{1-\rho}.
%$$
%\begin{remark}
%The number of used probes assumes that $d$ probes are always send consecutively and once a probe finds an empty queue, the remaining probes are not send. 
%\end{remark}

\subsection{Idle Server Messaging (ISM)}\label{sec:example_JIQ}
In this scheme the dispatcher does not probe to discover idle servers, instead a server 
notifies the dispatcher whenever it becomes idle. In case of infinite memory, the dispatcher
knows all idle server ids at all times and the system reduces to the JIQ policy
when $d=1$. Our interest lies mostly in knowing what happens when the memory size is finite
and the job is assigned to the shortest of $d$ queues whenever the memory is empty
when a job arrives.

If we denote $A$ as the number of ids that can be stored in memory, we show that $\pi_0$ 
is given by
\begin{equation}\label{eq:pi_0_JIQ}
\pi_0
=
\frac{1-(1-\rho^d)^{\frac{1}{A+1}}}{\rho^d}.
\end{equation}
For SQ($d$) this is shown in Proposition \ref{prop:pi_0_JIQ_SQd_gen}, while for LL($d$) this is presented in Proposition \ref{prop:pi_0_JIQ_LLd}. In particular, this result entails that $\pi_0$ is insensitive to the job size distribution for SQ($d$) and LL($d$).

If we assume that the $d$ probes are transmitted one at a time when memory is empty and the dispatcher stops probing as soon as an idle server is discovered, the number of probes and messages transmitted by the
dispatchers and servers per job arrival can be expressed as:
\[\pi_0\frac{1-\rho^d}{1-\rho}+(1-\pi_0 \rho^d),\]
where the first term corresponds to the number of probes send per arrival by the
dispatcher and the second correspond to the number of server messages per arrival
(which is equal to the probability that a job is assigned to an idle server).

%When the dispatcher is able to collect all probes we know exactly which queues are empty and the policy reduces to the classic JIQ policy. We consider the case where the memory at the dispatcher is finite, i.e.~only a finite number of queues $A\in \mathbb{N}$ can be \textquotedblleft known\textquotedblright\ to be empty. This implies that when the server already knows of $A$ servers that they are empty when another server becomes available, this server will not be added to the memory, its probe will therefore be lost. If we then take $d=1$, this policy reduces to the classic JIQ policy with finite memory, the analysis for this policy is presented in Section \ref{sec:LLd_analysis}, in particular the result for JIQ with finite memory is found in the remark after Theorem \ref{thm:response_LLd}.

%In case job sizes are exponential it is not hard to see for this policy that the memory size process is a birth-death markov process $(X_t)_t$ with death rate $\lambda$ and birth rate $\mu (u_1-u_2)$ where $u_k$ denotes the probability that a random queue has $k$ or more jobs. For SQ($d$), this yields (as we will show in Proposition \ref{prop:pi_0_JIQ_SQd}):
%\begin{equation}\label{eq:pi_0_JIQ}
%\pi_0 = \frac{1-(1-\rho^d)^{\frac{1}{A+1}}}{\rho^d}.
%\end{equation}
%We show that the same formula for $\pi_0$ holds for LL($d$) with general job size distribution. In particular, this entails that $\pi_0$ is insensitive to the job size distribution (see Proposition \ref{prop:pi_0_JIQ_LLd}).

\section{Description of the queue at the cavity}  \label{sec:cavity}
Our analysis is based on the {\it queue at the cavity} method which was introduced  in \cite{bramsonLB} to analyze load balancing systems. The key idea is to focus on the evolution of a single tagged queue,
referred to as the queue at the cavity, and 
to assume that all other queues have the same queue length (or workload) distribution at any time $t$.
Moreover the queue length   (or workload) of any finite set of queues is 
assumed to be independent at any time $t$. We first explain the approach in a system without memory
and then indicate how to adapt it to incorporate memory.

In a system without memory, the queue at the cavity experiences potential arrivals at rate $\lambda d$
as this is the rate at which a tagged queue is selected as one of the $d$ randomly selected queues. 
If  a potential arrival occurs at time $t$, $d-1$ i.i.d.~random variables are initialized which have the same queue length (or workload) distribution as the queue at the cavity at time $t$. The potential arrival becomes an actual arrival
if the queue at the cavity has the shortest queue (or smallest workload) amongst these $d$ values (where ties are broken at random).
For SQ($d$) with exponential job sizes with mean $1/\mu$ the queue length of the queue at the cavity decreases at a constant rate equal to $\mu$ in between potential arrivals, while for LL($d$) the workload
decreases linearly at rate $1$ when larger than zero. For Phase Type distributed job sizes, one needs to include the phase of the job at the head of the queue, while for general job sizes we need to include the work left for the job at the head of the queue.

To incorporate memory into the cavity method we note that the state of the memory (that is, the number of ids that it contains) evolves at a faster time scale than the fraction of queues with a
certain queue length (or workload). As such the state of the memory at time $t$ is 
given by the steady state $\pi(t)$ of the discrete time Markov chain with transition
matrix $M(\rho(t))$ that captures the 
evolution of the memory, where $\rho(t)$ is the fraction of busy servers at time $t$
(see Section \ref{sec:examples} for some examples with $\rho(t)=\rho$). For more details on the concept of the time-scale separation we employ, we refer the reader to \cite{benaim2008class}.

Let $\pi_0(t)$, the first entry of $\pi(t)$, represent the probability that the memory is empty at time $t$. 
We modify the queue at the cavity by decreasing the potential arrival rate to the queue at the cavity to $\lambda d \pi_0(t)$, i.e.~potential arrivals only occur when there is no empty queue to join in memory. These potential arrivals are then dealt with in the exact same manner as in the setting without memory. When the queue at the cavity is empty, we assume that on top of the potential arrival rate of $\lambda d \pi_0(t)$, we have an effective arrival rate of $\lambda \frac{1-\pi_0(t)}{1-\rho(t)}$. 
The latter arrival rate can be interpreted as follows: jobs arrive at rate $\lambda N$,
with probability $(1-\pi_0(t))$ such a job is assigned to a queue in memory and with
probability $1/((1-\rho(t))N)$ the queue at the cavity gets the job as it is one of
the $(1-\rho(t))N$ idle servers at time $t$.

In the next section we study the
cavity process of SQ($d$) and LL($d$) with memory in detail. 
We assume job sizes have some general distribution with 
probability density function (pdf) $g$, 
cumulative distribution function (cdf) $G$ and complementary cdf (ccdf) $\bar G$. For a random variable with cdf $H$ we let $\E[H]$ denote its mean. Let $\mu = \frac{1}{\E[G]}$ denote the mean service rate and note that we have for the system load: $\rho = \lambda \cdot \E[G]$. Furthermore we let $\GG$ denote a generic random variable with distribution $G$. We will sometimes assume that $\GG$ is an exponential random variable. Furthermore, for LL($d$) we denote by $f, F$ and $\bar F$ the pdf, cdf and ccdf of the workload distribution of the queue at the cavity in equilibrium (note that we have $\bar F(0) = \rho$). For SQ($d$) with exponential job sizes we denote by $u_k$ the equilibrium probability that the queue at the cavity has $k$ or more jobs (with $u_0=1$ and $u_1=\rho$).

\section{Analysis of the queue at the cavity}\label{sec:analysis}
We now analyze the queue at the cavity described in the previous section for SQ($d$) and LL($d$).
Note that the results presented in this section apply to any of the memory schemes discussed in
Section \ref{sec:examples}. To obtain results for a specific memory scheme one simply replaces $\pi_0$
by the appropriate expression. We show that the equilibrium queue length and workload distribution
of SQ($d$) and LL($d$) with memory, respectively, have exactly the same form as in the same setting 
without memory if we replace $\lambda$ by $\lambda \pi_0^{1/d}$ and divide by $\pi_0^{1/d}$.
With respect to the response time distribution, we show that the system with memory 
and arrival rate $\lambda$ has the same response time distribution as the system without memory and arrival rate $\lambda \pi_0^{1/d}$.

%Suppose we have some memory based scheme which adds slots in the memory as the ones considered in Section \ref{sec:examples}. We show for each of the three models that the equilibrium distribution of the queue at the cavity is equivalent to the analogue process with no memory where we scale down the arrival rate to $\lambda \pi_0^{1/d}$ to obtain the equilibrium queue length resp.~workload distribution $u_k^{(no-mem)}$ resp.~$\bar F^{(no-mem)}$. The workload distribution resp.~queue length distribution for the process with memory is then given by $u_k^{(mem)}=\pi_0^{1/d} \cdot u_k^{(no-mem)}$ resp.~$\bar F^{(mem)}=\pi_0^{-1/d} \cdot \bar F^{(no-mem)}$. Moreover, we will see that this implies that the response time distribution for the process without memory with arrival rate $\lambda \pi_0^{1/d}$ is exactly equal to the response time distribution of the memory dependent policy with arrival rate $\lambda$.

%However, for SQ($d$) and JIQ with exponential job sizes, we show in Section \ref{sec:conv_SQd} that, to compute the equilibrium queue length distribution. The proof for more general job size distributions and LL($d$) is much more difficult, we conjecture it to be true and show in Section \ref{sec:finAcc} that our analysis performs well as $N\rightarrow \infty$.  by e.g.~\cite{benaim2008class}
\subsection{SQ($d$)}
In this section we develop the analysis of the queue at the cavity for the SQ($d$) policy, we start by assuming job sizes are exponential and subsequently we consider Phase Type and general job sizes.
\subsubsection{Exponential Job Sizes}
We start by describing the transient behaviour of the queue at the cavity for SQ($d$):
\begin{proposition}\label{prop:trans} 
Consider the SQ($d$) policy with memory, exponential job sizes with mean $1/\mu$ and
arrival rate $\lambda < \mu$.
Let $u_k(t)$ be the probability that the queue at the cavity has $k$ or more jobs at time $t$, then
\begin{align}
\frac{d}{dt} u_k(t)
&=
\lambda \pi_0(t) (u_{k-1}(t)^d - u_k(t)^d)
-
\mu (u_k(t)-u_{k+1}(t)),
  \label{eq:transient_memory_SQd1}\\
\frac{d}{dt} u_1(t)
&=
\lambda \pi_0(t) (u_{0}(t)^d - u_1(t)^d) +
\lambda (1-\pi_0(t))\
- \mu (u_1(t) - u_{2}(t)), \label{eq:transient_memory_SQd2}
\end{align}
for $k \geq 2$ and $u_0(t)=1$.
\end{proposition}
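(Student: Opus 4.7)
The plan is to derive the two ODEs as the forward Kolmogorov equations for the queue length process of the tagged cavity queue, using the rates specified in Section \ref{sec:cavity}. For each $k$, write $\frac{d}{dt}u_k(t)$ as (rate of up-jumps from length $k-1$ to $k$) minus (rate of down-jumps from length $k$ to $k-1$), and then identify the two contributions in the statement.

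The down-jumps are immediate: service is exponential at rate $\mu$, so given the tagged queue has exactly $k$ jobs it transitions to $k-1$ at rate $\mu$. This contributes $-\mu(u_k(t) - u_{k+1}(t))$ in both equations, with no dependence on $\pi_0(t)$.

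The up-jumps split into two pieces. The first is the SQ($d$) contribution driven by the potential arrival stream of rate $\lambda d \pi_0(t)$: conditional on the tagged queue having length $m$, a potential arrival becomes actual iff the tagged queue is the (tie-break-winning) shortest among the $d$ sampled queues. Under the cavity ansatz the other $d-1$ samples are i.i.d.\ with the tagged queue's length distribution, so I would condition on the number $j \in \{0,\ldots,d-1\}$ of them that also have length exactly $m$ and apply the identity
\begin{equation*}
\sum_{j=0}^{d-1} \binom{d-1}{j}\frac{p^{j} q^{d-1-j}}{j+1}
=
\frac{(p+q)^d - q^d}{d\, p},
\end{equation*}
with $p = u_m(t) - u_{m+1}(t)$ and $q = u_{m+1}(t)$. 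This gives $P(\text{tagged wins}\mid \text{length }m) = (u_m^d - u_{m+1}^d)/(d(u_m - u_{m+1}))$. Multiplying by $\lambda d \pi_0(t)$ and by $P(\text{tagged has length }m) = u_m - u_{m+1}$ and then specialising to $m = k-1$ yields an up-jump rate of exactly $\lambda \pi_0(t)(u_{k-1}(t)^d - u_k(t)^d)$, matching the first bracket in both equations.

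For $k=1$ I must add the second piece coming from the extra arrival rate $\lambda(1-\pi_0(t))/(1-\rho(t))$ that only acts when the tagged queue is empty. Because the busy-server fraction equals the tagged queue's probability of being non-empty under the ansatz, i.e.\ $\rho(t) = u_1(t)$, the tagged queue is empty with probability $1 - \rho(t)$, and the unconditional extra contribution to the up-jump at level $1$ reduces cleanly to $\lambda(1-\pi_0(t))$; this term is absent for $k \geq 2$ since then the tagged queue is non-empty. Assembling up-jumps minus down-jumps gives \eqref{eq:transient_memory_SQd1} for $k \geq 2$ and \eqref{eq:transient_memory_SQd2} for $k = 1$, with the boundary $u_0(t) \equiv 1$ holding by definition. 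The only non-routine step is the tie-breaking identity above; once that is established the rest is direct bookkeeping of the dynamics defined in Section \ref{sec:cavity}.
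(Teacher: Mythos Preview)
Your proposal is correct and follows essentially the same approach as the paper: both identify the up-transition rate from level $k-1$ via the combinatorial tie-breaking identity $\sum_{j=0}^{d-1}\binom{d-1}{j}\frac{p^j q^{d-1-j}}{j+1}=\frac{(p+q)^d-q^d}{dp}$, and both add the memory-induced term $\lambda(1-\pi_0(t))$ only at $k=1$ after the factor $1-\rho(t)=u_0(t)-u_1(t)$ cancels. The only difference is presentational---the paper carries out the explicit $\Delta$-expansion $u_k(t+\Delta)=Q_{1,k}+Q_{2,k}+Q_{3,k}(+Q_{4,1})$ and passes to the limit, whereas you write the forward Kolmogorov rates directly.
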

\begin{proof}
Let $\Delta > 0$ be arbitrary, we first assume that $k\geq 2$ and consider the cases in which the queue at the cavity may have $k$ or more jobs at time $t+\Delta$. First, it may have exactly $k$ jobs at time $t$ and no departures occur in $[t,t+\Delta]$, this occurs with probability:
\begin{equation}
Q_{1,k}=(1-\mu \Delta) (u_k(t)-u_{k+1}(t)) + o(\Delta). \label{eq:transient_uk_proof_1.1}
\end{equation}
It may also have $k+1$ or more jobs at time $t$, and at most $1$ departure occurs in $[t,t+\Delta]$:
\begin{equation}
Q_{2,k}=u_{k+1}(t) + o(\Delta). \label{eq:transient_uk_proof_1.2}
\end{equation}
A third possibility is that it had exactly $k-1$ jobs at time $t$ and exactly one arrival occurs in $[t,t+\Delta]$ which joined the queue at the cavity, this occurs with probability:
\begin{align}
Q_{3,k}&=\lambda d  \left(\int_0^\Delta \pi_0(t+\delta) d\delta \right)  (u_{k-1}(t)-u_k(t)) \\
&\sum_{j=0}^{d-1} \frac{1}{j+1	} \binom{d-1}{j} (u_{k-1}(t)-u_k(t))^j \cdot u_k(t)^{d-j} +o(\Delta)\nonumber\\
&= \lambda \left(\int_0^\Delta \pi_0(t+\delta) d\delta \right)  (u_{k-1}(t)^d - u_k(t)^d)+o(\Delta).
\label{eq:transient_uk_proof_2}
\end{align}
We now obtain:
$$
u_{k}(t+\Delta)= Q_{1,k}+Q_{2,k}+Q_{3,k},
$$
subtracting $u_k(t)$ on both sides, dividing by $\Delta$ and taking the limit $\Delta \rightarrow 0$ yields \eqref{eq:transient_memory_SQd1}. For \eqref{eq:transient_memory_SQd2}, one needs to consider the same $Q_{1,k},Q_{2,k}$ and $Q_{3,k}$ as for $k\geq 2$ for the case of potential arrivals. There is however an additional term for the case where the queue at the cavity is empty at time $t$ and it experiences an arrival due to the memory induced arrival rate, this yields:
$$
Q_{4,1}
=
\lambda \left( \int_0^\Delta \frac{1-\pi_0(t+\delta)}{u_0(t+\delta)-u_1(t+\delta)} d\delta \right) (u_{0}(t)-u_1(t)) + o(\Delta),
$$
one then obtains $u_1(t+\Delta)=Q_{1,1}+Q_{2,1}+Q_{3,1}+Q_{4,1}$, subtracting $u_1(t)$, dividing both sides by $\Delta$ and taking the limit $\Delta \rightarrow 0$ yields \eqref{eq:transient_memory_SQd2}. Finally the last equation $u_0(t)=1$ is trivial by the definition of $u_0(t)$.
\end{proof}

From the transient regime, we are able to deduce the equilibrium workload distribution:
\begin{theorem}\label{thm:closed_SQd}
Consider the SQ($d$) policy with memory, exponential job sizes with mean $1/\mu$ and
arrival rate $\lambda < \mu$.
Let $u_k$ be the equilibrium probability that the queue at the cavity has $k$ or more jobs, then
\begin{equation}
u_k = \rho^{\frac{d^{k} - 1}{d-1}} \cdot \pi_0^{\frac{d^{k-1}-1}{d-1}}
= (\rho \pi_0^{1/d})^{\frac{d^{k} - 1}{d-1}}/\pi_0^{1/d},\label{eq:closed_SQd}
\end{equation}
for $k \geq 1$ and $\rho=\lambda/\mu$.
\end{theorem}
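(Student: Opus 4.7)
The plan is to derive the equilibrium recursion from the transient ODEs of Proposition \ref{prop:trans}, extract the boundary value $u_1 = \rho$, and then solve the recursion by induction.

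First I would set the time derivatives in \eqref{eq:transient_memory_SQd1} and \eqref{eq:transient_memory_SQd2} to zero (with $\pi_0(t) \equiv \pi_0$ in steady state) to obtain
\begin{align*}
\mu(u_k - u_{k+1}) &= \lambda \pi_0 (u_{k-1}^d - u_k^d), \quad k \geq 2,\\
\mu(u_1 - u_2) &= \lambda \pi_0 (1 - u_1^d) + \lambda(1-\pi_0).
\end{align*}
For fixed $j \geq 2$ I would then sum the first family of equations from $k=j$ to $\infty$. Since $\rho<1$ and arrivals cannot create more work than in the classical SQ($d$) system, $u_k$ (and hence $u_k^d$) tends to $0$ as $k\to\infty$, so both telescoping sums converge and collapse, giving the clean recursion
\begin{equation*}
\mu u_j = \lambda \pi_0 u_{j-1}^d, \qquad j \geq 2.
\end{equation*}

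Next I would pin down the boundary value $u_1$. Applied at $j=2$, the recursion gives $\mu u_2 = \lambda \pi_0 u_1^d$. Substituting this into the $k=1$ equilibrium equation, the two $\lambda \pi_0 u_1^d$ terms cancel and the two $\lambda \pi_0$ constants cancel against $\lambda(1-\pi_0)$, leaving $\mu u_1 = \lambda$, i.e.\ $u_1 = \rho$. This is the step where the memory term $\lambda(1-\pi_0)$ does exactly the bookkeeping needed to preserve the overall server utilization, and it is the only place that the constraint $\lambda < \mu$ is really used (to ensure $u_1 \in [0,1]$ and hence the $u_k$'s stay bounded).

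Finally I would solve the recursion $u_{j+1} = \rho \pi_0 u_j^d$ with initial condition $u_1 = \rho$ by induction on $k$. Assuming $u_k = \rho^{(d^k-1)/(d-1)} \pi_0^{(d^{k-1}-1)/(d-1)}$, a short computation gives
\begin{equation*}
u_{k+1} = \rho \pi_0 \cdot \rho^{d(d^k-1)/(d-1)} \pi_0^{d(d^{k-1}-1)/(d-1)} = \rho^{(d^{k+1}-1)/(d-1)} \pi_0^{(d^k-1)/(d-1)},
\end{equation*}
closing the induction. The equivalent form $(\rho \pi_0^{1/d})^{(d^k-1)/(d-1)}/\pi_0^{1/d}$ is then immediate and is precisely the manifestation of the paper's main insight: the equilibrium agrees with the memoryless SQ($d$) formula \eqref{eq:SQd_vanilla} under the substitution $\lambda \mapsto \lambda \pi_0^{1/d}$, rescaled by $\pi_0^{1/d}$. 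The only mild subtlety in the whole argument is justifying the decay $u_k \to 0$ that makes the telescoping legitimate; everything else is algebraic.
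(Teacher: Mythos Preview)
Your proof is correct and follows essentially the same approach as the paper: set the ODEs of Proposition~\ref{prop:trans} to zero, telescope to obtain both $u_1=\rho$ and the recursion $\mu u_j=\lambda\pi_0 u_{j-1}^d$ for $j\ge 2$, and then read off the closed form. The only cosmetic difference is that the paper first sums all equations to get $u_1=\rho$ and then sums from $k\ge j$ to get the recursion, whereas you derive the recursion first and substitute it back into the $k=1$ balance; the content is identical.
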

\begin{proof}
Taking the limit of $t\rightarrow \infty$ in (\ref{eq:transient_memory_SQd1}-\ref{eq:transient_memory_SQd2}) we find that the following holds:
\begin{align*}
0 &= \lambda \pi_0 (u_0^d - u_1^d)+ \frac{\lambda (1-\pi_0)}{1-\rho} \cdot (u_0-u_1)-\mu \cdot (u_1 - u_2),\\
0 &= \lambda \pi_0 ( u_{k-1}^d - u_{k}^d ) - \mu \cdot (u_{k} - u_{k+1}),
\end{align*}
for $k \geq 2$. Summing all of these equations yields $u_1 = \rho$, while
taking the sum for $k\geq j$ implies that $u_j = \lambda \pi_0 u_{j-1}^d$ for
$j \geq 2$. This simple recurrence relation has \eqref{eq:closed_SQd} as its unique solution.
\end{proof}
Comparing \eqref{eq:closed_SQd} with the solution of \eqref{eq:SQd_vanilla}, we see that $u_k$
is identical as in the setting without memory if we replace $\rho$ by $\rho \pi_0^{1/d}$ and
divide by $\pi_0^{1/d}$ (even for $k = 1$).

\begin{theorem} \label{thm:response_SQd}
Let $0< \lambda < \mu$ be arbitrary and $R$ the response time of the SQ($d$) policy with 
memory, exponential job sizes with mean $1/\mu$ and arrival rate $\lambda$. Further, let $\tilde R$ denote the response time for the same system without memory, but with arrival rate $\lambda \pi_0^{1/d}$, then $\tilde R$ and $R$ have the same distribution.
\end{theorem}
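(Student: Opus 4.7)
The plan is to show that both $R$ and $\tilde R$ are identical mixtures of Erlang laws. The key algebraic fact driving the reduction is the scaling identity $\tilde u_k = \pi_0^{1/d}\, u_k$ for every $k\geq 1$, which falls out immediately from Theorem~\ref{thm:closed_SQd} together with the classical formula $\tilde u_k = (\rho \pi_0^{1/d})^{\frac{d^k-1}{d-1}}$ obtained by evaluating~\eqref{eq:SQd_vanilla} at the reduced arrival rate $\lambda \pi_0^{1/d}$. I would first record this identity explicitly.

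Next I would decompose an arriving job in the memory system by the state of the dispatcher. With probability $1-\pi_0$ the job is sent to a server whose id is in memory, hence to an empty queue; with probability $\pi_0$ the job is routed by SQ($d$) and, under the cavity independence assumption, joins the shortest of $d$ i.i.d.\ queues whose lengths are distributed as $\{u_j-u_{j+1}\}_{j\geq 0}$. This yields weights $p_0 = (1-\pi_0)+\pi_0(1-u_1^d)$ and $p_j=\pi_0(u_j^d-u_{j+1}^d)$ for $j\geq 1$. In the no-memory system at arrival rate $\lambda\pi_0^{1/d}$, every arrival is SQ($d$)-routed, so the analogous weights are $\tilde p_j=\tilde u_j^d-\tilde u_{j+1}^d$. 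Substituting the scaling identity and using $u_1=\rho$ (established inside the proof of Theorem~\ref{thm:closed_SQd}), both formulas reduce to $p_0=\tilde p_0 = 1-\pi_0 \rho^d$ and $p_j=\tilde p_j=\pi_0(u_j^d-u_{j+1}^d)$ for $j\geq 1$.

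Because job sizes are exponential with rate $\mu$ and service is FIFO, conditionally on joining a queue of length $j$ the response time is Erlang$(j+1,\mu)$ in either system, so the full response time distribution is completely determined by the mixture weights $\{p_j\}$ and therefore $R \stackrel{d}{=} \tilde R$. The main obstacle I expect is the boundary case $j=0$: the two disjoint pathways to an empty queue in the memory system (a memory hit, and an SQ($d$) probe that happens to find at least one idle server among the $d$ sampled queues) must recombine to the single probability $\tilde p_0$ of the no-memory system, and this hinges on the identity $u_1=\rho$ holding irrespective of $\pi_0$. A secondary point worth handling cleanly is the implicit PASTA-type claim that an SQ($d$)-routed arrival observes the stationary length distribution at each of the $d$ probed queues, which is justified by the time-scale separation argument already invoked in Section~\ref{sec:cavity}.
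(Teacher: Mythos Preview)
Your proposal is correct and follows essentially the same approach as the paper: both arguments hinge on the scaling identity $\tilde u_k = \pi_0^{1/d} u_k$ for $k\geq 1$ (the paper writes this as $\pi_0 u_k^d = v_k^d$) and compare the Erlang-mixture decompositions of $R$ and $\tilde R$. The only cosmetic difference is that you verify the mixture weights $p_j=\tilde p_j$ directly, whereas the paper instead writes out the ccdf $\bar F_R(w)$ as a double sum, exchanges the order of summation, and telescopes to obtain the common form $\sum_{n\geq 0}\frac{w^n}{n!}e^{-\mu w}\tilde u_n^d$; your handling of the $j=0$ boundary via $u_1=\rho$ is exactly what makes the paper's telescoping work as well.
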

\begin{proof}
Let us denote by $u_k$  and $v_k$ the probability that the queue at the cavity has at least $k$ jobs for the system with and without memory, respectively. 
We have $u_k=\pi_0^{-1/d} \cdot v_k$, for $k \geq 1$ and $u_0 = v_0 = 1$. Let $\bar F_X$ be the ccdf of $X$, then
\begin{align*}
\bar F_{R}(w)
&=(1-\pi_0) e^{-\mu w} + \pi_0 \sum_{k=0}^\infty (u_k^d-u_{k+1}^d) \sum_{n=0}^k \frac{w^n}{n!} e^{-\mu w},
\end{align*}
as with probability $(1-\pi_0)$ the job joins an idle queue from memory (meaning the response time is
simply exponential) and with
probability $\pi_0 (u_k^d-u_{k+1}^d)$ the job joins a queue with length $k$
(yielding an Erlang $k+1$ response time). Exchanging the order of the sums 
and using $\pi_0 u_k^d= v_k^d$, for $k \geq 1$, implies that
\begin{align*}
\bar F_{R}(w)
&=(1-\pi_0) e^{-\mu w} + \sum_{n=1}^\infty \frac{w^n}{n!} e^{-\mu w} v_n^d + \pi_0 e^{-\mu w} =\sum_{n=0}^\infty \frac{w^n}{n!} e^{-\mu w} v_n^d.
\end{align*}
Similarly,
\begin{align*}
\bar F_{\tilde R}(w)
&=\sum_{k=0}^\infty (v_k^d-v_{k+1}^d) \sum_{n=0}^k \frac{w^n}{n!} e^{-\mu w} =\sum_{n=0}^\infty \frac{w^n}{n!} e^{-\mu w} v_n^d.
\end{align*}
\end{proof}

\subsubsection{Phase Type Job Sizes}\label{sec:SQd_PH}
Phase Type (PH) distributions consist of all distributions which have a modulating finite background Markov chain (see also \cite{latouche1999introduction}). They form a broad spectrum of distributions as any positive valued distribution can be approximated arbitrarily close by a PH distribution. Moreover, various fitting tools are available online for PH distributions (e.g.~\cite{kriege2014ph,panchenko1}). A PH distribution with $\bar G(0) =1$ is fully characterized by a stochastic vector $\alpha=(\alpha_i)_{i=1}^n$ and a subgenerator matrix $A=(a_{i,j})_{i,j=1}^n$ such that $\bar G(w)=\alpha e^{Aw} \textbf{1}$, where $\textbf{1}$ is a column vector of ones.

We find that the result found in Theorem \ref{thm:response_SQd} generalizes to the case of PH distributed job sizes.
\begin{theorem} \label{thm:response_SQd_PHD}
Let $0<\lambda<\mu$ (with $1/\mu$ the mean of the job size distribution) be arbitrary and $R$ the response time for a memory dependent version of the SQ($d$) policy with PH distributed job sizes with parameters $(\alpha, A)$. Further, let $\tilde R$ denote the response time for the
classic SQ($d$) policy with the same job size distribution and arrival rate $\lambda \pi_0^{1/d}$, then $R$ and $\tilde R$ have the same distribution.
\end{theorem}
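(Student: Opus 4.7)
The plan is to mirror the proof of Theorem \ref{thm:response_SQd}, upgrading from scalar quantities $u_k$ to vector-valued quantities $u_{k,j}$ that record both the queue length and the current phase of the head-of-queue job. Throughout I write $v_{k,j}$ for the corresponding equilibrium quantities of the classic SQ($d$) system with arrival rate $\lambda \pi_0^{1/d}$ and the same PH$(\alpha, A)$ job size distribution, determined by the usual cavity fixed-point equations. The first step is to extend Proposition \ref{prop:trans} to PH job sizes: the forward equations for the joint (length, head-phase) distribution of the queue at the cavity under SQ($d$) with memory have the same shape as in the exponential case, with the SQ($d$) arrival contribution to the $(k,j)$ row proportional to $\lambda \pi_0 (u_{k-1}^d - u_k^d)\alpha_j$ (writing $u_k=\sum_j u_{k,j}$), service transitions generated by the sub-generator $A$ with the usual restart-with-$\alpha$ convention on departures, and an additional memory-induced arrival term $\lambda(1-\pi_0)/(1-\rho)$ on the $k=1$ row coming from jobs dispatched from memory.

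Second I would establish the scaling $u_{k,j} = v_{k,j}/\pi_0^{1/d}$ for $k\geq 1$ (with $u_{0,j}=v_{0,j}$). As in the proof of Theorem \ref{thm:closed_SQd}, summing the balance equations gives $u_1=\sum_j u_{1,j}=\rho$, the server busy probability guaranteed by Proposition \ref{prop:u1_eq_rho_SQd_gen}, and this is consistent with $v_1/\pi_0^{1/d}=\rho\pi_0^{1/d}/\pi_0^{1/d}=\rho$. For $k\geq 2$, summing the tail equations produces a recursion of the form $u_{k,\cdot} = \lambda \pi_0 \cdot T\bigl(u_{k-1,\cdot}\bigr)$ involving the $d$-th power of the $u_{k-1,\cdot}$ components and an operator $T$ determined only by $(\alpha, A)$; the classic no-memory recursion has the same form with $\lambda\pi_0^{1/d}$ in place of $\lambda\pi_0$. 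Substituting the ansatz $u_{k,\cdot}=v_{k,\cdot}/\pi_0^{1/d}$ absorbs a factor $\pi_0^{-1}$ coming from the $d$-th power of $\pi_0^{-1/d}$ and leaves $\lambda \pi_0^{1/d}$ in place of $\lambda\pi_0$, so the rescaled $v_{k,\cdot}$ solves the classic no-memory recursion.

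Third I would compute the CCDFs of $R$ and $\tilde R$ by conditioning on the state of the queue seen by the arriving job. The CCDF of $R$ decomposes as $(1-\pi_0)\bar G(w)$ (the job dispatched from memory, whose response time is a fresh PH$(\alpha, A)$) plus $\pi_0$ times a sum over $(k,j)$ of terms of the form $p_{k,j}(u)\, H_{k,j}(w)$, where $p_{k,j}(u)$ is a polynomial of degree $d$ in the components of $u_{\cdot,\cdot}$ and $H_{k,j}$ is the CCDF of a residual PH service starting in phase $j$ plus $k$ fresh PH$(\alpha,A)$ services. The CCDF of $\tilde R$ has the same form without the memory term and with $v$ in place of $u$. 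Using $\pi_0 u_{k,j}^d = v_{k,j}^d$ for $k\geq 1$ and noting that the $k=0$ contribution involves only $\bar G(w)$ (no residual on an empty queue), the memory term $(1-\pi_0)\bar G(w)$ combines with $\pi_0(1-\rho^d)\bar G(w)$ to produce $(1-\pi_0\rho^d)\bar G(w)=(1-v_1^d)\bar G(w)$, which matches the $k=0$ term of $\bar F_{\tilde R}(w)$; the $k\geq 1$ terms match directly via the scaling.

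The main obstacle will be Step 2, verifying that a single scalar rescaling $\pi_0^{1/d}$ applied uniformly across all phases is enough. Because the PH sub-generator $A$ couples different entries of $u_{k,\cdot}$, it is not obvious a priori that a phase-independent rescaling works. The key should be that the phase-dependent part of each fixed-point equation depends only on the conditional head-phase distribution $u_{k,j}/u_k$, which is insensitive to a uniform multiplicative rescaling of $u_k$ in $k$, so only the scalar length marginals need to absorb the factor $\pi_0^{1/d}$.
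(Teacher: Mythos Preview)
Your overall strategy---establish the scaling $u_{k,j}=\pi_0^{-1/d}v_{k,j}$ via the equilibrium equations, then compare the two response-time decompositions---is exactly the paper's approach. However, two concrete details in your write-up would not survive contact with the actual equations.

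First, the SQ($d$) arrival term for the $(k,j)$ row with $k\geq 2$ is \emph{not} proportional to $\alpha_j$. An arriving job joins the tail of a FCFS queue, so the head-of-queue phase is preserved: the correct arrival contribution is $\lambda\pi_0\,\frac{u_{k-1,j}-u_{k,j}}{u_{k-1}-u_k}\,(u_{k-1}^d-u_k^d)$, i.e.\ the SQ($d$) selection probability times the conditional phase distribution at level $k-1$. The factor $\alpha_j$ appears only when the queue was empty ($k=1$) and in the restart-on-departure term. Your final paragraph correctly identifies the scale-invariance of the conditional phase distribution as the reason the uniform rescaling works, but the body of the argument contradicts this. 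Relatedly, ``summing the tail equations'' does not yield a clean recursion $u_{k,\cdot}=\lambda\pi_0\,T(u_{k-1,\cdot})$ here, because the service part $\sum_{j'}(u_{k,j'}A_{j',j}+u_{k+1,j'}\nu_{j'}\alpha_j)$ does not telescope. The paper instead writes down the equilibrium equations for $u_{k,j}$ and for $v_{k,j}$ separately, multiplies the latter by $\pi_0^{-1/d}$, and observes that the two systems coincide once you substitute $\pi_0^{-1/d}v_{k,j}$ for $u_{k,j}$; the scale-invariance of the phase ratio and the linearity of the service terms are what make this work.

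Second, in the response-time step the weight with which a job sees state $(k,j)$ is $\frac{u_{k,j}-u_{k+1,j}}{u_k-u_{k+1}}(u_k^d-u_{k+1}^d)$, not a polynomial of degree $d$ in the $u_{k,j}$ themselves; the $d$th power acts on the queue-length marginal $u_k$, because selection under SQ($d$) depends only on queue length. The scaling you need is therefore $\pi_0\,u_k^d=v_k^d$ together with invariance of the phase ratio, not $\pi_0\,u_{k,j}^d=v_{k,j}^d$. Once you fix these two points the argument goes through exactly as in the paper.
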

\begin{proof}
Let us denote by $u_{k,j}(t)$ resp.~$v_{k,j}(t)$ the probability that, at time $t$, the queue at the cavity has at least $k$ jobs and the job at the head of the queue is in phase $j$ for the memory dependent scheme resp.~the memory independent scheme. Furthermore let $u_{k,j}$ and $v_{k,j}$ denote the limit of $t\rightarrow\infty$ for these values. We first show that $u_{k,j}=\pi_0^{-1/d} \cdot v_{k,j}$. Throughout we let $\nu = -A \textbf{1}$ (with \textbf{1} a vector consisting of only ones). For $v_{k,j}$ we find by an analogous reasoning as in \cite{van2015performance} that for $k\geq 2$:
\begin{align}
\frac{d}{dt} v_{k,j}(t)
&= \lambda \pi_0(t)^{1/d} \frac{v_{k-1,j}(t)-v_{k,j}(t)}{v_{k-1}(t)-v_k(t)} (v_{k-1}(t)^d-v_k(t)^d \nonumber\\
&+ \sum_{j'} \left( v_{k,j'}(t) A_{j',j} + v_{k+1,j'}(t) \nu_{j'} \alpha_j\right), \label{eq:SQd_PH_proof1}
\end{align}
where $v_k(t)$ denotes $\sum_j v_{k,j}(t)$ (further on, we also use this notation for $v_k, u_k(t)$ and $u_k$). For $k=1$ we find:
\begin{align}
\frac{d}{dt} v_{1,j}(t) &= \alpha_j \lambda \pi_0(t)^{1/d} (1-v_1(t)^d) + \sum_{j'} \left( v_{1,j'}(t) A_{j',j} + v_{2,j'}(t) \nu_{j'} \alpha_j\right).\label{eq:SQd_PH_proof2}
\end{align}
Taking the limit of $t$ to infinity and multiplying by $\pi_0^{-1/d}$ we find that \eqref{eq:SQd_PH_proof1} yields for the equilibrium distribution (with $k\geq 2$):
\begin{align}
0&= \pi_0 \lambda \frac{(\pi_0^{-1/d} v_{k-1,j})- (\pi_0^{-1/d} v_{k,j})}{(\pi_0^{-1/d} v_{k-1})-(\pi_0^{-1/d} v_{k})} \cdot \left( (\pi_0^{-1/d} v_{k-1})^d - (\pi_0^{-1/d} v_{k})^d \right)\nonumber\\
&+ \sum_{j'} (\pi_0^{-1/d} v_{k,j'}) A_{j',j} + (\pi_0^{-1/d} v_{k+1,j'}) \nu_{j'} \alpha_j.\label{eq:SQd_PH_proof3}
\end{align}
while for $k=1$ one may compute from \eqref{eq:SQd_PH_proof2}:
\begin{align}
0
&=
\alpha_j \lambda \left(1- \pi_0 (\pi_0^{-1/d}v_1)^d \right)\nonumber\\
& + \sum_{j'} \bigg( (\pi_0^{-1/d} v_{1,j'}) A_{j',j} + (\pi_0^{-1/d} v_{2,j'})\nu_{j'} \alpha_j \bigg)\label{eq:SQd_PH_proof4}
\end{align}
For $(u_{k,j}(t))$ with $k\geq 2$, we find the same ODE as \eqref{eq:SQd_PH_proof1} but with $\lambda \pi_0(t)$ rather than $\lambda \pi_0^{1/d}(t)$. Taking the limit $t\rightarrow \infty$ it is not hard to see that $u_{k,j}$ satisfies \eqref{eq:SQd_PH_proof3} with $\pi_0^{-1/d} v_{k,j}$ replaced by $u_{k,j}$. Furthermore for $u_{1,j}(t)$ we find (similar to Proposition \ref{prop:trans}):
\begin{align*}
\frac{d}{dt} u_{1,j}(t)
&= \lambda \alpha_j \pi_0(t) (1-u_1(t)^d) + \lambda \alpha_j (1-\pi_0(t))\nonumber \\
& +\sum_{j'} (u_{1,j'}(t) A_{j',j} + u_{2,j'}(t) \nu_{j'} \alpha_j).
\end{align*}
Taking $t\rightarrow \infty$ it is not hard to see how this equation for $u_{k,j}$ reduces to \eqref{eq:SQd_PH_proof4} with $\pi_0^{-1/d} v_{k,j}$ replaced by $u_{k,j}$. This shows that we indeed have for all $k$ and $j$ that $u_{k,j}=\pi_0^{-1/d} v_{k,j}$.

For the response time distribution we denote by $X_{k,j}$ the response time of a job
that joins a queue with length $k$ in phase $j$. We find for the memory dependent policy:
\begin{align*}
\bar F_{R}(w)
&= (1-\pi_0) \bar G(w) + \pi_0 \bigg( (1-u_1^d) \bar G(w)\\
& + \sum_{k=1}^\infty \sum_j \frac{u_{k,j}-u_{k+1,j}}{u_k-u_{k+1}} \cdot (u_k^d-u_{k+1}^d) \P\{X_{k,j} > w\} \bigg)\\
&=(1-(\pi_0^{1/d} u_1)^d)\bar G(w) \\
& + \sum_{k=1}^\infty \sum_j \frac{\pi_0^{1/d}u_{k,j}-\pi_0^{1/d}u_{k+1,j}}{\pi_0^{1/d}u_{k}-\pi_0^{1/d}u_{k+1}}  \bigg((\pi_0^{1/d}u_{k})^d - (\pi_0^{1/d}u_{k+1})^d\bigg) \P\{X_{k,j} > w \}\\
\end{align*}
One can now easily check that $R$ and $\tilde R$ indeed coincide.
\end{proof}

\subsubsection{General Job Sizes}
We further generalize the results given in section \ref{sec:SQd_PH} to the case of general job sizes. In particular we show the following result :
\begin{theorem} \label{thm:response_SQd_gen}
Let $0<\lambda<\mu$ (with $1/\mu$ the mean of the job size distribution) be arbitrary and $R$ the response time for a memory dependent version of the SQ($d$) policy with an arbitrary job size distribution. Further, let $\tilde R$ denote the response time for the
classic SQ($d$) policy with the same job size distribution and arrival rate $\lambda \pi_0^{1/d}$, then $R$ and $\tilde R$ have the same distribution.
\end{theorem}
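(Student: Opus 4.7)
The plan is to mimic the structure of the proof of Theorem \ref{thm:response_SQd_PHD}, replacing the discrete phase index $j$ by a continuous supplementary variable that records the residual service time (equivalently, the age) of the job currently in service. Concretely, for both the memory-dependent system with arrival rate $\lambda$ and the classical SQ($d$) system with arrival rate $\lambda \pi_0^{1/d}$, I would let $u_k(x,t)\,dx$ (resp.\ $v_k(x,t)\,dx$) denote the probability that at time $t$ the queue at the cavity has at least $k$ jobs with residual service of the head-of-line job in $[x,x+dx]$. The key claim to establish is the pointwise identity $u_k(x)=\pi_0^{-1/d}v_k(x)$ for every $k\geq 1$ and $x\geq 0$ in equilibrium; the response-time equality then follows by a routine calculation.

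First, I would write down the transient balance equations for $u_k(x,t)$, paralleling \eqref{eq:SQd_PH_proof1}-\eqref{eq:SQd_PH_proof2}. For $k\geq 2$ the equation is a supplementary-variable PDE of the form $\partial_t u_k - \partial_x u_k = \lambda\pi_0(t)\,\frac{u_{k-1}(x,t)-u_k(x,t)}{u_{k-1}(t)-u_k(t)}\bigl(u_{k-1}(t)^d-u_k(t)^d\bigr) + u_{k+1}(0,t)\,g(x) - u_k(0,t)\,g(x)$ (the last two terms describe the re-initialization of the HoL residual upon a departure), together with the usual boundary/normalization identities. For $k=1$ the extra memory-induced term $\lambda(1-\pi_0(t))$ enters, exactly as in Proposition \ref{prop:trans} and the PH proof. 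The classical system satisfies the same equations with $\pi_0(t)\equiv 1$ and $\lambda$ replaced by $\lambda\pi_0^{1/d}$.

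Second, I would take $t\to\infty$ and substitute $u_k(x)=\pi_0^{-1/d}v_k(x)$ into the equilibrium equations. For $k\geq 2$ the factor $\pi_0$ in front of the arrival term combines with $(\pi_0^{-1/d})^d=\pi_0^{-1}$ from the $v_{k-1}^d-v_k^d$ block to produce exactly the factor $\pi_0^{1/d}$ required for the classical system; the ratio $(u_{k-1}(x)-u_k(x))/(u_{k-1}-u_k)$ is scale-invariant and so is unchanged. For $k=1$, one uses $u_1=\rho$ (established in Proposition \ref{prop:u1_eq_rho_SQd_gen}, referenced in the footnote) together with the algebraic identity $\lambda\pi_0(1-u_1^d)+\lambda(1-\pi_0)=\lambda\pi_0^{1/d}\bigl(1-(\pi_0^{1/d}u_1)^d\bigr)\cdot\pi_0^{-1/d}$ when $u_1=\rho$, which is exactly the same simplification that made \eqref{eq:SQd_PH_proof4} reduce to the classical equation. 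By uniqueness of the equilibrium this forces $u_k(x)=\pi_0^{-1/d}v_k(x)$.

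Third, I would compute $\bar F_R(w)$ as in Theorem \ref{thm:response_SQd_PHD}: conditional on the queue at the cavity having length $k$ with HoL residual $x$, a tagged arrival sees response time distributed as $x$ plus $k$ independent copies of $\GG$ plus its own independent $\GG$. Splitting into the two dispatching modes gives
\begin{align*}
\bar F_R(w) &= (1-\pi_0)\bar G(w) + \pi_0(1-u_1^d)\bar G(w) \\
 &\quad + \sum_{k=1}^{\infty}\int_0^\infty \frac{u_k(x)-u_{k+1}(x)}{u_k-u_{k+1}}\bigl(u_k^d-u_{k+1}^d\bigr)\,\P\{x+\GG^{(*k)}>w\}\,dx,
\end{align*}
where $\GG^{(*k)}$ is a $(k{+}1)$-fold convolution. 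Substituting $u_k(x)=\pi_0^{-1/d}v_k(x)$ and using $u_1=\rho$ collapses the two leading $\bar G(w)$ terms into $\bigl(1-(\pi_0^{1/d}u_1)^d\bigr)\bar G(w)$ and turns the summand into exactly the corresponding expression for $\bar F_{\tilde R}(w)$ in the classical system with rate $\lambda\pi_0^{1/d}$.

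The main obstacle is a notational/technical one: writing down the supplementary-variable equation for $u_k(x,t)$ cleanly enough that the algebraic substitution in the second step is transparent. Once that bookkeeping is set up, the cancellations are dictated by the same two mechanisms that worked in the PH case ($\pi_0\cdot\pi_0^{-1}=1$ in the arrival term, and $u_1=\rho$ in the $k=1$ correction), so no new ideas beyond those in Theorem \ref{thm:response_SQd_PHD} are needed.
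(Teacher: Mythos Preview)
Your proposal is correct and follows essentially the same route as the paper: introduce a supplementary variable for the residual service of the head-of-line job, derive the equilibrium PDEs for the memory-dependent and memory-free cavity queues, verify the scaling $u_k(\cdot)=\pi_0^{-1/d}v_k(\cdot)$, and then match the response-time ccdf. The paper works with ``exactly $k$ jobs'' densities $x_k(w)$ rather than your ``at least $k$'' densities, but this is purely bookkeeping. One small remark: your invocation of $u_1=\rho$ in the $k=1$ step is unnecessary, since the identity you write, $\lambda\pi_0(1-u_1^d)+\lambda(1-\pi_0)=\lambda\pi_0^{1/d}\bigl(1-(\pi_0^{1/d}u_1)^d\bigr)\pi_0^{-1/d}$, reduces on both sides to $\lambda(1-\pi_0 u_1^d)$ for \emph{any} $u_1$; accordingly the paper does not use $u_1=\rho$ here and in fact proves Proposition~\ref{prop:u1_eq_rho_SQd_gen} only afterwards, using the equations set up in the proof of this theorem.
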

\begin{proof}
Let us denote by $x_{k}(t,w)$ resp.~$y_k(t,w)$ the density at which, at time $t$, the queue at the cavity has exactly $k$ jobs and the job at the head of the queue has a remaining size exactly equal to $w$ for the memory dependent scheme resp.~the memory independent scheme. Associated to these values, we denote $u_k(t)=\int_0^\infty \sum_{\ell \geq k} x_\ell(t,w) \, dw$ and $v_k(t) = \int_0^\infty \sum_{\ell \geq k} y_\ell (t,w) \, dw$. Furthermore let $x_{k}(w), y_{k}(w)$ and $u_{k}, v_k$ denote the limit of $t\rightarrow\infty$ for these values. We first show that $x_{k}(w)=\pi_0^{-1/d} \cdot y_{k}(w)$ (and consequently also $u_k=\pi_0^{-1/d} \cdot v_k$).

Let us first consider $x_{k}(t,w)$ for $k \geq 2$. Analogously to the proof for exponential and Phase Type job sizes, we obtain:
\begin{align*}
x_k(t+\Delta, w)
&= x_k(t,w+\Delta) - \lambda d x_k(t,w+\Delta) \int_0^\Delta \pi_0(t+\delta) \cdot \sum_{j=0}^{d-1} \frac{1}{j+1} \binom{d-1}{j}\\
& (u_k(t+\delta)-u_{k+1}(t+\delta))^j u_{k+1}(t+\delta)^{d-j-1} \, d \delta + \lambda d x_{k-1}(t, w+\Delta)\\
& \int_0^\Delta \pi_0(t+\delta) \sum_{j=0}^{d-1} \frac{1}{j+1} \binom{d-1}{j} (u_{k-1}(t+\delta)-u_{k}(t+\delta))^j\\
& u_{k+1}(t+\delta)^{d-j-1} \, d \delta + \int_0^\Delta x_{k+1}(t,\delta) g(w+\Delta-\delta) \, d \delta + o(\Delta).
\end{align*}
Subtracting $x_k(t,w)$ on both sides, dividing both sides by $\Delta$ and taking the limit $\Delta \rightarrow 0^+$ we obtain the following system of IDEs:
\begin{align*}
\frac{\partial x_k(t,w)}{\partial t} - \frac{\partial x_k(t,w)}{\partial w}
&=-\lambda \pi_0(t) \frac{x_k(t,w)}{x_k(t)} (u_k(t)^d-u_{k+1}(t)^d)\\
& +\lambda \pi_0(t) \frac{x_{k-1}(t,w)}{x_{k-1}(t)} (u_{k-1}(t)^d - u_k(t)^d) + x_{k+1}(0^+) g(w).
\end{align*}
Taking the limit of $t\rightarrow \infty$ we obtain:
\begin{align}
x_k'(w)&=\lambda \pi_0 \frac{x_k(w)}{x_k}(u_k^d-u_{k+1}^d)-\lambda \pi_0 \frac{x_{k-1}(w)}{x_{k-1}} (u_{k-1}^d-u_k^d)  - x_{k+1} (0^+) g(w). \label{eq:mem_gen_sizes}
\end{align}
A differential equation for the system without memory can be
 inferred from \eqref{eq:mem_gen_sizes} by setting $\pi_0=1$ and replacing $\lambda$
 by $\lambda \pi_0^{1/d}$:
\begin{align*}
y_k'(w)&= \lambda \pi_0^{1/d} \frac{y_k(w)}{y_k} (v_k^d - v_{k+1}^d) - \lambda \pi_0^{1/d} \frac{y_{k-1}(w)}{y_{k-1}} (v_{k-1}^d-v_k^d) - y_{k+1}(0^+) g(w).
\end{align*}
Multiplying both sides by $\pi_0^{-1/d}$, we find that $y_k$ satisfies the following (for $k\geq 2$):
\begin{align*}
(\pi_0^{-1/d} y_k(w))'
&= \lambda \pi_0 \frac{\pi_0^{-1/d} y_k(w)}{\pi_0^{-1/d} y_k} ((\pi_0^{-1/d}v_k)^d-(\pi_0^{-1/d} v_k)^d) - \lambda \pi_0 \frac{\pi_0^{-1/d} y_{k-1}(w)}{\pi_0^{-1/d}y_{k-1}}\\
& ((\pi_0^{-1/d}v_{k-1})^d 
-(\pi_0^{-1/d}v_k)^d)- (\pi_0^{-1/d} y_{k+1}(0^+)) g(w),
\end{align*}
which is identical to \eqref{eq:mem_gen_sizes} if we replace $x_k$ with $\pi_0^{-1/d} y_k$. 

It remains to look at the case $k=1$. For this case, the arrivals we need to consider are those which occur when the queue at the cavity is empty. Therefore, we need to consider two types of arrivals: those which occur due to the fact that the queue at the cavity is in the memory and those which occur due to the queue at the cavity being selected by the SQ($d$) policy. For the arrivals incurred by the memory we find :
\begin{align*}
&\lim_{t\rightarrow\infty} \lim_{\Delta \rightarrow 0^+} \lambda \frac{1}{\Delta} \int_0^\Delta (1-\pi_0(t+\delta)) g(w+\Delta-\delta) \, d\delta + \frac{o(\Delta)}{\Delta} = \lambda (1-\pi_0) g(w).
\end{align*}
The arrivals incurred from the SQ($d$) policy are similar to the case $k\geq 2$, we obtain that $x_1'(w)$ satisfies:
\begin{align}
x_1'(w)
&= \lambda \pi_0 \frac{x_1(w)}{x_1} (u_1^d - u_2^d) - \lambda \pi_0 (1-u_1^d) g(w)- x_2(0^+) g(w) - \lambda (1-\pi_0) g(w). \label{eq:x1_gensizes}
\end{align}
For the system without memory we replace $\pi_0$ by $1$ and $\lambda$ by
$\lambda \pi_0^{1/d}$. If we then multiply both sides by $\pi_0^{-1/d}$, we obtain:
\begin{align}
(\pi_0^{-1/d} &y_1(w))'
= \lambda \pi_0 \frac{\pi_0^{-1/d} y_1(w)}{\pi_0^{-1/d} y_1} ((\pi_0^{-1/d} v_1)^d-(\pi_0^{-1/d} v_2)^d)\nonumber\\
&-\lambda  g(w) + \lambda \pi_0 g(w) (\pi_0^{-1/d} v_1)^d - (\pi_0^{-1/d}y_2(0^+)) g(w). \label{eq:y1_gensizes}
\end{align}
It is not hard to see that \eqref{eq:x1_gensizes} and \eqref{eq:y1_gensizes} are equivalent (with $x_k$ replaced by $\pi_0^{-1/d} y_k$). This shows that we indeed have $x_k(w)=\pi_0^{-1/d} y_k(w)$ for all $k\geq 1$ and $w\geq 0$.

For the response time distribution of the system with memory, we obtain:
\begin{align*}
\bar F_R(w)
&= (1-\pi_0) \bar G(w)\\
& + \pi_0 (1-x_0^d) \bar G(w)+ \pi_0 \sum_{k=1}^\infty \int_0^w \frac{x_k(s)}{x_k} (u_k^d - u_{k+1}^d) \mathbb{P}\{ \mathcal{G}^{*k}  > w-s \} \, ds\\
&= (1- (\pi_0^{1/d} x_0)^d) \bar G(w)\\
& + \sum_{k=1}^\infty \int_0^w \frac{x_k(s)}{x_k} ((\pi_0^{1/d} u_k)^d-(\pi_0^{1/d} u_{k+1})^d) \mathbb{P}\{\mathcal{G}^{*k} > w-s \} \, ds.
\end{align*}
Analogously one can compute $\bar F_{\tilde R}$ to complete the proof.
\end{proof}
\begin{remark}
When $d=1$ in Theorem \ref{thm:response_SQd_gen}, the system without memory reduces to an 
ordinary M/G/1 queue with arrival rate $\lambda \pi_0^{1/d}$ for which many results exist. 
In particular, we find from the Pollaczek-Khinchin formula that the following holds:
\begin{equation}\label{eq:JIQ}
 R^*(w) = \frac{(1-\pi_0^{1/d})\rho \mathcal{G}^*(w)w}{\pi_0^{1/d}\lambda \mathcal{G}^*(w)+w- \pi_0^{1/d} \lambda} 
\end{equation}
with $R^*$  and $\mathcal{G}^*$ the Laplace transform of $R$ and $\mathcal{G}$,
respectively. Using the ISM scheme presented in Section \ref{sec:example_JIQ}, 
this allows one to analyze the JIQ policy with finite memory by plugging $\pi_0=\frac{1-(1-\rho)^{\frac{1}{A+1}}}{\rho}$ into \eqref{eq:JIQ} (see also Proposition \ref{prop:pi_0_JIQ_SQd_gen}).
\end{remark}

Using the ideas in Theorem \ref{thm:response_SQd_gen} we are able to show that $u_1=\rho$ holds:
\begin{proposition} \label{prop:u1_eq_rho_SQd_gen}
For the memory dependent SQ($d$) policy with general job sizes we have $u_1 = \rho$.
\end{proposition}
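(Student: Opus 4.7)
The plan is to leverage the density identification $x_k(w) = \pi_0^{-1/d}\, y_k(w)$ for every $k\geq 1$ and $w\geq 0$ that was established inside the proof of Theorem~\ref{thm:response_SQd_gen}, combined with the standard utilization identity for the classical SQ($d$) cavity queue driven by the reduced arrival rate $\tilde{\lambda} = \lambda\pi_0^{1/d}$.

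First I would integrate the identification over $w\in[0,\infty)$ for each $k\geq 1$. Since $\int_0^\infty x_k(w)\,dw = u_k - u_{k+1}$ and $\int_0^\infty y_k(w)\,dw = v_k - v_{k+1}$, this gives $u_k - u_{k+1} = \pi_0^{-1/d}(v_k - v_{k+1})$. Summing over $k\geq 1$ and using $u_k,v_k\to 0$ as $k\to\infty$ telescopes to $u_1 = \pi_0^{-1/d}\, v_1$.

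Next I would invoke the classical work-conservation identity in the memoryless cavity process: because every arrival is dispatched by SQ($d$) in that system, the cavity queue sees effective throughput $\tilde{\lambda}$, hence its busy probability is $v_1 = \tilde{\lambda}\,\E[G] = \rho\,\pi_0^{1/d}$. Combining with the previous step gives $u_1 = \pi_0^{-1/d}\cdot\rho\,\pi_0^{1/d} = \rho$.

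The subtle point is that the utilization identity $v_1 = \tilde{\lambda}\,\E[G]$ should be derived from the cavity fixed-point equations themselves rather than by a folklore work-conservation appeal, since otherwise the argument risks circularity with the way the memory-induced arrival rate was scaled by $1-\rho$ when writing \eqref{eq:x1_gensizes}. The cleanest route is the same integration-and-telescoping used for the first step: integrating the IDE for $y_k$ over $w$ yields $y_k(0^+) = \tilde{\lambda}(v_{k-1}^d - v_k^d)$ for each $k\geq 1$, so that summing over $k$ telescopes the right-hand side to $\tilde{\lambda}\, v_0^d = \tilde{\lambda}$; single-server busy-time theory applied to the memoryless cavity queue then pins $v_1$ to $\tilde{\lambda}\,\E[G]$, finishing the argument.
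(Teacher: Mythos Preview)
Your approach is correct but takes a genuinely different route from the paper. The paper works \emph{directly} with the memory-dependent densities $x_k$: it integrates \eqref{eq:mem_gen_sizes} and \eqref{eq:x1_gensizes} from $w$ to $\infty$, sums the resulting expressions over $k\geq 1$ to obtain $\sum_{k\geq 1} x_k(w) = \bigl(\lambda(1-\pi_0 u_1^d) + u_2(0^+)\bigr)\bar G(w)$, integrates once more over $w$, and then computes $u_2(0^+)=\sum_{k\geq 2} x_k(0^+)=\lambda\pi_0 u_1^d$ from the separate relation $x_k(0^+)=\lambda\pi_0(u_{k-1}^d-u_k^d)$. Substituting the latter into the former yields $u_1=\lambda\E[G]=\rho$ with no reference to Theorem~\ref{thm:response_SQd_gen} or to the classical system.

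Your route instead pushes the work through the identification $x_k=\pi_0^{-1/d}y_k$ from Theorem~\ref{thm:response_SQd_gen} and then has to establish $v_1=\tilde\lambda\,\E[G]$ for the classical cavity queue. That last step is where your sketch gets thin: the telescoping identity $\sum_{k\geq 1} y_k(0^+)=\tilde\lambda$ pins down the throughput of the cavity queue but does not by itself give $v_1$; you still need the utilization law (busy fraction $=$ throughput $\times$ mean service time), which is precisely the ``folklore work-conservation appeal'' you caution against two sentences earlier. The appeal is in fact harmless here---the classical cavity queue has no memory-induced term, so there is no scaling by $1-\rho$ to worry about---but if you want the argument to live entirely inside the fixed-point equations, the clean way is to run the paper's integrate-from-$w$-to-$\infty$-then-sum computation on the $y_k$ system, at which point you may as well have done it on the $x_k$ system directly. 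As a side remark, your circularity concern about \eqref{eq:x1_gensizes} is misplaced: the memory-induced rate in the cavity description is $\lambda(1-\pi_0)/(1-u_1)$, with $u_1$ the cavity's own busy probability, and the $(1-u_1)$ cancels against the idle probability, so $\rho$ never enters that equation a priori.
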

\begin{proof}
We use the same notation as in the proof of Theorem \ref{thm:response_SQd_gen}. Furthermore, we denote $\tilde x_k(w) = \int_w^\infty x_k(u) \, du$. We now wish to show that $u_1 = \sum_{k=1}^\infty \int_0^\infty x_k(w) \, dw = \rho$.

Integrating \eqref{eq:x1_gensizes} from $w$ to $\infty$, we find:
\begin{align}
x_1(w) &= - \lambda \pi_0 \frac{\tilde x_1 (w)}{x_1} (u_1^d - u_2^d)\nonumber\\
& + \lambda \pi_0 (1-u_1^d) \bar G(w) + x_2(0^+) \bar G(w) + \lambda (1-\pi_0) \bar G(w). \label{eq:x1w_proof_u1}
\end{align}
For $k \geq 2$ we find from \eqref{eq:mem_gen_sizes} that (integrate from $w$ to infinity):
\begin{align}
x_k(w)
&= - \lambda \pi_0 \frac{\tilde x_k(w)}{x_k} (u_k^d - u_{k+1}^d) + \lambda \pi_0 \frac{\tilde x_{k-1}(w)}{x_{k-1}} (u_{k-1}^d - u_k^d) + x_{k+1}(0^+) \bar G(w). \label{eq:xkw_proof_u1}
\end{align}
It is now easy to see from taking the sum of \eqref{eq:x1w_proof_u1} and \eqref{eq:xkw_proof_u1} (for all $k\geq 2$) that for any $w$:
\begin{align}
u_1(w) &= \lambda (1-\pi_0 u_1^d) \bar G(w) + u_2(0^+) \bar G(w). \label{eq:u1w}
\end{align}
Integrating this expression from $0$ to infinity, we obtain:
\begin{equation}\label{eq:u1}
u_1 = \left( u_2(0^+) + \lambda (1-\pi_0 u_1^d) \right) \E[G].
\end{equation}

Furthermore, it is not hard to see that we have for any $k \geq 2$:
\begin{align*}
u_k(t+\Delta) &= \int_0^\Delta \big( u_k(t+\delta) - x_k(t+\delta, \Delta - \delta) \big) d\delta  \\
&+ \lambda \int_0^\Delta \pi_0(t+\delta) \left(u_{k-1}(t+\delta)^d - u_k(t+\delta)^d \right) \, d\delta + o(\Delta),
\end{align*}
%subtracting $u_k(t)$ on both sides, dividing by $\Delta$ and taking the limit $\Delta \rightarrow 0^+$ we obtain for every $k\geq 2$:
$$
u_k'(t) = -x_k(t,0^+) + \lambda \pi_0(t) (u_{k-1}^d(t)- u_k(t)^d),
$$
letting $t\rightarrow \infty$ this leads to:
$$
0 = - x_k(0^+) + \lambda \pi_0 (u_{k-1} ^ d - u_k^d).
$$
Taking the sum of these equations for $k \geq 2$ we obtain:
$$
u_2(0^+) = \lambda \pi_0 u_1^d.
$$
Using this allows us to conclude that $u_1 = \lambda \E[G] = \rho$ from \eqref{eq:u1}.
\end{proof}

In the following Proposition, we obtain $\pi_0$ for the ISM memory scheme presented in Section \ref{sec:example_JIQ}.
\begin{proposition} \label{prop:pi_0_JIQ_SQd_gen}
For the SQ($d$) policy with general job sizes and the ISM memory scheme presented in Section \ref{sec:example_JIQ} we have 
$$
\pi_0=\frac{1-(1-\rho^d)^{\frac{1}{A+1}}}{\rho^d}.
$$
\end{proposition}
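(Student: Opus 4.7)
The plan is to derive $\pi_0$ by balancing the flow of idle server ids into and out of memory, in the spirit of the memory Markov chains used for the probing schemes in Section~\ref{sec:examples}. By Proposition~\ref{prop:u1_eq_rho_SQd_gen}, the steady-state fraction of busy servers equals $\rho$. First I compute the rate at which servers become idle: in steady state this coincides (by global balance of idle$\leftrightarrow$busy transitions) with the rate at which jobs are dispatched to idle servers. Jobs arrive at rate $\lambda N$; a fraction $1-\pi_0$ of them are assigned via memory and therefore always land on an idle server, while the remaining fraction $\pi_0$ triggers an SQ($d$) lookup, in which case the chosen server is idle iff at least one of the $d$ probes is idle, i.e.\ with probability $1-\rho^d$. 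Hence idle notifications are emitted at total rate $R := \lambda N (1-\pi_0 \rho^d)$.

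Next I would invoke the time-scale separation of Section~\ref{sec:cavity} to model the memory as a continuous-time birth--death chain on $\{0,1,\dots,A\}$: births occur at rate $R$ whenever the memory is not full (an idle notification arrives and gets stored; notifications received with a full memory are dropped), and deaths occur at rate $\lambda N$ whenever the memory is non-empty (an incoming job consumes one id). Detailed balance then gives $\pi_k = q^k \pi_0$ with
$$
q \;=\; \frac{R}{\lambda N} \;=\; 1-\pi_0 \rho^d,
$$
and the normalization $\sum_{k=0}^{A}\pi_k=1$ reads $\pi_0 (1-q^{A+1})/(1-q) = 1$.

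Substituting $1-q = \pi_0 \rho^d$ collapses the normalization to $1-(1-\pi_0\rho^d)^{A+1}=\rho^d$, i.e.\ $1-\pi_0\rho^d=(1-\rho^d)^{1/(A+1)}$, which rearranges to the claimed expression. The main delicate point is the first step: asserting that idle notifications arrive at the \emph{constant} rate $R$ irrespective of the current memory state. This is precisely the time-scale separation hypothesis underlying all of Section~\ref{sec:cavity}, namely that the server busy/idle statistics equilibrate on a faster time scale than the memory counter, so that the transition rates of the memory chain can be replaced by their equilibrium averages. Given this, everything else is a routine geometric-series manipulation, and notably yields $\pi_0$ insensitively of the job size distribution (the only input from the service-time statistics being $u_1=\rho$).
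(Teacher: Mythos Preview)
Your argument is correct and matches the paper's: both set up the memory as a birth--death chain on $\{0,\dots,A\}$ with birth/death ratio $q=1-\pi_0\rho^d$ and then solve the resulting fixed-point equation for $\pi_0$. The only difference is in how the birth rate is obtained---the paper reads it off as $x_1(0^+)=\lambda(1-\pi_0\rho^d)$ by letting $w\to 0^+$ in \eqref{eq:u1w}, whereas you derive the same quantity by a direct balance of job flow into idle servers, which is arguably the more transparent route.

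One caveat: your description of the time-scale separation is inverted. As stated in Section~\ref{sec:cavity}, it is the \emph{memory} that is the fast variable (its transitions occur at rate $\Theta(N)$) while the macroscopic busy fraction is the slow one; the correct picture is that the memory equilibrates while $\rho(t)$ is essentially frozen, not the converse. This does not affect the steady-state algebra you carry out, but the justification should be amended accordingly.
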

\begin{proof}
We use the same notation as in the proof of Proposition \ref{prop:u1_eq_rho_SQd_gen}. 
The rate at which servers send probes is equal to $x_1(0^+)$ (which is equal to the rate at which servers become idle). Therefore, the memory state evolves as a birth-death process with  birth rate $x_1(0^+)$ and death rate $\lambda$. From taking the limit $w\rightarrow 0^+$ in \eqref{eq:u1w}, we find that $x_1(0^+)= \lambda (1- \pi_0 u_1^d)$.

We consequently find that due to the birth-death structure:
\[
\pi_0 = \frac{1}{\sum_{i=0}^A (1-\pi_0 \rho^d)^i}=\frac{\pi_0 \rho^d}{1- (1-\pi_0 \rho^d)^{A+1}}.
\]
From this one easily completes the proof.
\end{proof}
In particular, Proposition \ref{prop:pi_0_JIQ_SQd_gen} holds for $d=1$, which provides a closed form of $\pi_0$ for JIQ with finite memory size.

\subsection{LL($d$)}
For LL($d$), we again start by describing the transient regime (the proof is similar to the one presented in \cite{hellemans2018power}).
\begin{proposition}\label{th:PIDE}
The density of the cavity process associated to the memory dependent LL($d$) policy satisfies the following Partial Integro Differential Equations (PIDEs):
\begin{align}
\frac{\partial f(t,w)}{\partial t} - \frac{\partial f(t,w)}{\partial w} &= \lambda d \pi_0(t)\int_0^w f(t,u) \bar F(t,u)^{d-1} g(w-u) du \nonumber\\
& + \lambda \pi_0(t) (	1-\bar F(t,0)^d) g(w) - \lambda d \pi_0(t) f(t,w)\bar F(t,w)^{d-1}\nonumber \\
& + \lambda (1-\pi_0(t)) g(w) \label{eq:PIDE1}\\
&\frac{\partial \bar F(t,0)}{\partial t} = -f(t,0^+) + \lambda \pi_0(t) (1-\bar F(t,w)^d) + \lambda (1-\pi_0(t)),\label{eq:PIDE2}
\end{align}
for $w>0$, where $f(x,z^+) = \lim_{y \downarrow z} f(x,y)$.
\end{proposition}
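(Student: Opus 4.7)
The plan is to follow the standard transient cavity-method derivation, as in \cite{hellemans2018power} and in the proof of Proposition~\ref{prop:trans}. Fix $w>0$ and let $\Delta>0$ be small. I would write $f(t+\Delta,w)$ as a sum, valid up to $o(\Delta)$, of the density contributions coming from the disjoint events describing what happened to the cavity queue during $[t,t+\Delta]$: (i) it had workload $w+\Delta$ at time $t$ and received no arrival, contributing $f(t,w+\Delta)(1-O(\Delta))$; (ii) it had workload $u\in(0,w)$ at time $t$ and received a potential LL($d$) arrival of size $w-u+O(\Delta)$, contributing $\lambda d\pi_0(t)f(t,u)\bar F(t,u)^{d-1}g(w-u)\Delta\,du$ integrated over $u$; (iii) it was idle at time $t$ and received an arrival of size $w+O(\Delta)$, either from the LL($d$) mechanism or from the memory mechanism described in Section~\ref{sec:cavity}. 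Subtracting $f(t,w)$, dividing by $\Delta$, sending $\Delta\to 0^+$, and using $\lim_{\Delta\to 0^+}(f(t,w+\Delta)-f(t,w))/\Delta=\partial f(t,w)/\partial w$ rearranges these contributions into~\eqref{eq:PIDE1}, together with the loss term $-\lambda d\pi_0(t)f(t,w)\bar F(t,w)^{d-1}$ that accounts for the rate at which an LL($d$) arrival received at workload $w$ moves the cavity queue strictly above $w$.

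The only subtle step is computing the arrival rate at an idle cavity queue in event (iii), because $w=0$ is the sole atom of the workload distribution and ties must be broken uniformly. Conditional on the cavity queue being idle, the number of idle peers among the other $d-1$ probed queues is Binomial$(d-1,1-\bar F(t,0))$; summing $\binom{d-1}{k}(1-\bar F(t,0))^k \bar F(t,0)^{d-1-k}/(k+1)$ over $k$ and using the identity $\binom{d-1}{k}/(k+1)=\binom{d}{k+1}/d$ gives the probability $(1-\bar F(t,0)^d)/(d(1-\bar F(t,0)))$ that the cavity wins the tie-break. Multiplying by the selection rate $\lambda d\pi_0(t)$, by the idle probability $1-\bar F(t,0)$, and by the job-size density $g(w)$ produces the term $\lambda\pi_0(t)(1-\bar F(t,0)^d)g(w)$ in~\eqref{eq:PIDE1}. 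The memory mechanism contributes a per-idle-queue arrival rate $\lambda(1-\pi_0(t))/(1-\bar F(t,0))$ as stipulated in Section~\ref{sec:cavity}; multiplying by $1-\bar F(t,0)$ and by $g(w)$ yields the last term $\lambda(1-\pi_0(t))g(w)$.

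For~\eqref{eq:PIDE2}, I would compute $\partial\bar F(t,0)/\partial t$ directly by balancing transitions into and out of the busy state. The busy-to-idle rate is $f(t,0^+)$ (the flux of probability mass reaching zero under unit-speed drift), and the idle-to-busy rate equals the idle probability $1-\bar F(t,0)$ times the total per-idle-queue arrival rate computed in the previous paragraph, which simplifies to $\lambda\pi_0(t)(1-\bar F(t,0)^d)+\lambda(1-\pi_0(t))$, producing~\eqref{eq:PIDE2} (reading the $\bar F(t,w)$ inside the right-hand side of~\eqref{eq:PIDE2} as $\bar F(t,0)$). The main obstacle throughout is the careful bookkeeping of the tie-breaking rule at the atom $w=0$ and of the extra arrival channel carried by the memory mechanism; the remainder is a direct adaptation of the transient balance argument used in the proof of Proposition~\ref{prop:trans} and in \cite{hellemans2018power}.
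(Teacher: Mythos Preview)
Your proposal is correct and follows essentially the same route as the paper's proof: decompose $f(t+\Delta,w)$ into contributions from ``no arrival'', ``arrival while busy'', and ``arrival while idle'', subtract, divide by $\Delta$, and pass to the limit; then derive~\eqref{eq:PIDE2} by a separate flux balance at the atom $w=0$. The only cosmetic differences are that the paper subtracts $f(t,w+\Delta)$ rather than $f(t,w)$ (which packages the $\partial_w$ term on the left immediately), and that the paper writes the idle-win probability directly as $(1-\bar F(t,0)^d)/d$ whereas you spell out the binomial tie-breaking sum and the identity $\binom{d-1}{k}/(k+1)=\binom{d}{k+1}/d$; your observation that the $\bar F(t,w)$ on the right-hand side of~\eqref{eq:PIDE2} should read $\bar F(t,0)$ is also correct.
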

\begin{proof}
Assume $w > 0$ and let $w > \Delta > 0$ be arbitrary. As for SQ($d$), we write:
\begin{equation}\label{eq:f1}
f(t+\Delta, w) = Q_{1,w} + Q_{2,w} + Q_{3,w}.
\end{equation}
For $Q_{1,w}$ we consider the case where no arrivals occur in the interval $[t,t+\Delta]$: if the cavity queue at time $t$ has a workload exactly equal to $w + \Delta$ and receives no arrivals 
in $[t,t+\Delta]$, it has a workload equal to $w$ at time $t+\Delta$. Therefore we find:
\begin{align*}
Q_{1,w} &= f(t, w + \Delta) - \lambda d \left( \int_{0}^{\Delta} \pi_0(t+\delta)
f(t+\delta,w+\Delta - \delta) d\delta \right) + o(\Delta).
\end{align*}
For $Q_{2,w}$ we consider the case where a single arrival occurs when the queue 
at the cavity is busy: in this case at some time $t+\delta, \delta \in [0,\Delta]$ an arrival of size $w+\Delta-u$ occurs, while the queue at the cavity has workload $u-\delta$ 
for some $u \in (\delta,w+\Delta]$. This arrival only joins the queue at the cavity if the
other $d-1$ queues have a workload that exceeds $u-\delta$, hence we find:
\begin{align*}
Q_{2,w} &= \lambda d  \int_{0}^{\Delta} \pi_0(t+\delta)\\
& \int_{u=\delta}^{w+\Delta} f(t+\delta, u-\delta) \bar F(t+\delta, u-\delta)^{d-1} g(w+\Delta-u)dud\delta  + o(\Delta).
\end{align*}
Finally a single arrival may occur when the cavity queue is empty: in this case a job 
of size $w + \Delta - \delta$ arrives at time $t+\delta$ for some $\delta \in [0,\Delta]$. Hence,
\begin{align*}
Q_{3,w} &= \lambda d \int_{0}^{\Delta} \pi_0(t+\delta) \frac{(1-\bar F(t+\delta,0)^d)}{d} g(w+\Delta-\delta)d\delta\\
&+\lambda \int_{0}^\Delta \frac{1-\pi_0(t+\delta)}{1-\bar F(t+\delta,0)} (1-\bar F(t+\delta,0)) g(w+\Delta-\delta) d\delta + o(\Delta).
\end{align*}
By subtracting $f(t,w+\Delta)$, dividing by $\Delta$ and letting  $\Delta$ decrease to zero, we find \eqref{eq:PIDE1} from \eqref{eq:f1}.

%\begin{align*}\label{eq:gen_PDE}
%\frac{\partial f(t,s)}{\partial t} - &\frac{\partial f(t,s)}{\partial s} = - \lambda d c_d(t,s)  + \lambda d \int_{u=0}^s c_d(t,u) g(s-u) du \nonumber \\
%& + \lambda d C_d(t,0) g(s).
%\end{align*}
We still require an equation for $F(t,0)$, the probability that the server is idle.
A server may be idle at time $t+\Delta$ by remaining idle in $[t,t+\Delta]$ or by
having a workload equal to $\Delta - \delta, \delta < \Delta$ at time $t + \delta$. 
We therefore find:
\begin{align*}
&F(t+\Delta, 0) = F(t,0) - \lambda d \int_{0}^{\Delta} \pi_0(t+\delta) 
\frac{(1-\bar F(t+\delta,0)^d)}{d}\, d\delta\\
& -\lambda \int_0^\Delta \frac{1-\pi_0(t+\delta)}{1-\bar F(t+\delta,0)} (1-\bar F(t+\delta,0)) \, d\delta + \int_{0}^{\Delta} f(t+\delta, \Delta - \delta)\, d\delta + o(\Delta),
\end{align*}
subtracting $F(t,0)$, dividing by $\Delta$ and letting $\Delta$ tend to zero yields \eqref{eq:PIDE2} after multiplying both sides by $(-1)$.
%$$
%\frac{\partial F(t,0)}{\partial t} = f(t,0^+) - \lambda d C_d(t,0).
%$$
\end{proof}

This result readily provides us with the equilibrium workload distribution for the LL($d$) policy
with memory:
\begin{theorem}
The ccdf of the equilibrium workload distribution for the cavity process associated to an LL($d$)
 policy with memory satisfies the following IDE:
\begin{equation}
\bar F'(w)=-\lambda \left[ \bar G(w) + \pi_0 \cdot \left( -\bar F(w)^d + \int_0^w \bar F(u)^d g(w-u) \, du \right) \right]. \label{eq:IDE_LLd}
\end{equation}
with boundary condition $\bar F(0)=\rho$. Equivalently we have:
\begin{equation}
\bar F(w)=\rho - \lambda \int_0^w (1-\pi_0 \bar F(u)^d) \bar G(w-u) \, du. \label{eq:FPE_LLd}
\end{equation}
with $\pi_0$ the probability that the memory is empty.
\end{theorem}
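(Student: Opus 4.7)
The plan is to pass to equilibrium in the PIDEs of Proposition \ref{th:PIDE} by sending $t\to\infty$. Under the ansatz that the cavity density converges, $\partial f/\partial t$ will vanish and $\pi_0(t)\to\pi_0$, so \eqref{eq:PIDE1} reduces to the first-order integro-differential equation
\[-f'(w) = \lambda d \pi_0 \int_0^w f(u)\bar F(u)^{d-1} g(w-u)\,du + \lambda \pi_0 (1-\bar F(0)^d) g(w) - \lambda d \pi_0 f(w) \bar F(w)^{d-1} + \lambda (1-\pi_0) g(w),\]
which I will massage into \eqref{eq:IDE_LLd}. The key algebraic identity throughout is $d\, f(u)\bar F(u)^{d-1} = -\tfrac{d}{du}\bar F(u)^d$, which follows directly from $\bar F'=-f$.

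First I integrate the above relation from $w$ to $\infty$. The left-hand side becomes $f(w)=-\bar F'(w)$ using $f(\infty)=0$. The pointwise $f(s)\bar F(s)^{d-1}$ term integrates to $-\lambda\pi_0\bar F(w)^d$ via the identity, and swapping the order of integration in the double integral produces $\lambda d \pi_0\int_0^w f(u)\bar F(u)^{d-1}\bar G(w-u)\,du$ together with a second $\lambda\pi_0\bar F(w)^d$ that cancels the first. Next I apply the identity once more inside the remaining integral and integrate by parts (with boundary contributions $\bar F(w)^d$ at $u=w$ and $-\bar F(0)^d\bar G(w)$ at $u=0$) to convert it into a convolution of $\bar F^d$ with $g$. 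Crucially, the $\bar F(0)^d\bar G(w)$ boundary term will cancel against the free term $\lambda\pi_0\bar F(0)^d\bar G(w)$ inherited from the right-hand side of \eqref{eq:PIDE1}, leaving exactly \eqref{eq:IDE_LLd} with no residual $\bar F(0)$ dependence.

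To establish the boundary value $\bar F(0)=\rho$, I integrate \eqref{eq:IDE_LLd} from $0$ to $\infty$: using $\bar F(\infty)=0$, $\int_0^\infty\bar G(w)\,dw=\E[G]=1/\mu$, and the order-swap $\int_0^\infty\!\int_0^w\bar F(u)^d g(w-u)\,du\,dw = \int_0^\infty\bar F(u)^d\,du$, the two $\pi_0$ contributions cancel and $-\bar F(0)=-\lambda/\mu$ drops out. The fixed-point form \eqref{eq:FPE_LLd} then follows by integrating \eqref{eq:IDE_LLd} from $0$ to $w$ with this boundary value; another order-swap combined with $G(w-u)+\bar G(w-u)\equiv1$ will collapse the result into $\rho-\lambda\int_0^w(1-\pi_0\bar F(u)^d)\bar G(w-u)\,du$.

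The main delicate point will be the bookkeeping of sign conventions and boundary contributions across the two integrations by parts and the order-swaps, in particular verifying that the spurious $\bar F(0)^d$ terms produced by each step cancel exactly. Once this pattern is set up (following the classical LL($d$) derivation in \cite{hellemans2018power}), no new technical ingredients are required.
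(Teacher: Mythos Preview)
Your proposal is correct and follows essentially the same route as the paper's (sketched) proof: let $t\to\infty$ in the PIDE \eqref{eq:PIDE1}, integrate once, and repeatedly use Fubini together with the identity $d\,f(u)\bar F(u)^{d-1}=-\tfrac{d}{du}\bar F(u)^d$ to obtain \eqref{eq:IDE_LLd}, then integrate again for \eqref{eq:FPE_LLd}. The only cosmetic difference is that the paper fixes the constant of integration via the stationary form of \eqref{eq:PIDE2} at $w=0$, whereas you integrate from $w$ to $\infty$ (using decay at infinity) and then recover $\bar F(0)=\rho$ a posteriori by integrating \eqref{eq:IDE_LLd} over $(0,\infty)$; both choices lead to the same equations.
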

\begin{proof}
To show this result, one first lets $t\rightarrow \infty$ in (\ref{eq:PIDE1}-\ref{eq:PIDE2}), this way we remove the $\frac{\partial f(t,w)}{\partial t}$ and $\frac{\partial \bar F(t,0)}{\partial t}$. One then integrates \eqref{eq:PIDE1} once and uses \eqref{eq:PIDE2} as a boundary condition. Using Fubini, simple integration techniques and the fact that $f(w)=-\bar F'(w)$ we obtain \eqref{eq:IDE_LLd}. The last equality \eqref{eq:FPE_LLd} can be shown by integrating once more and applying Fubini's theorem.
\end{proof}

We can rewrite \eqref{eq:FPE_LLd} as
$$
\pi_0^{1/d} \bar F(w)
=
\E[G] (\lambda \pi_0^{1/d}) - (\lambda \pi_0^{1/d}) 
\int_0^w (1-(\pi_0^{1/d}\bar F(u))^d) \bar G(w-u) \, du.
$$
Comparing this expression with \eqref{eq:FbarW_classicSQd_IDE}, we note that $\bar F(w)$
in a system with memory is equal to the same probability in a system without memory with
arrival rate $\lambda \pi_0^{1/d}$ divided by $\pi_0^{1/d}$. Due to \eqref{eq:FbarW_classicSQd_closed}
we therefore have the following corollary:

\begin{corollary}
The equilibrium workload of the queue at the cavity of an LL($d$) system with memory and
exponential job sizes is given by 
\begin{equation}
\bar F(w)= (\rho \pi_0 + (\rho^{1-d} - \rho \pi_0) e^{(d-1)w})^{\frac{1}{1-d}} \label{eq:closed_LLd}
\end{equation}
\end{corollary}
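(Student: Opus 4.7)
The plan is to exploit the transformation identified immediately before the corollary: for the memory dependent LL($d$) system the quantity $\pi_0^{1/d}\bar F(w)$ satisfies the same fixed point equation \eqref{eq:FbarW_classicSQd_IDE} as the memory-free LL($d$) ccdf does when the arrival rate is $\lambda\pi_0^{1/d}$. In the exponential case the memory-free fixed point equation has the explicit solution \eqref{eq:FbarW_classicSQd_closed}, so the strategy is simply to apply this observation and then rearrange.

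First I would denote by $\bar F_0(w;\tilde\rho)$ the closed form \eqref{eq:FbarW_classicSQd_closed} viewed as a function of the load parameter $\tilde\rho$, i.e.\
\begin{equation*}
\bar F_0(w;\tilde\rho)=\bigl(\tilde\rho+(\tilde\rho^{\,1-d}-\tilde\rho)\,e^{(d-1)w}\bigr)^{\frac{1}{1-d}}.
\end{equation*}
By the preceding observation (applied to the exponential job size case, in which $\rho=\lambda/\mu$ simply scales with $\lambda$), the memory dependent ccdf is obtained via
\begin{equation*}
\bar F(w)=\pi_0^{-1/d}\,\bar F_0\bigl(w;\rho\pi_0^{1/d}\bigr).
\end{equation*}

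Next I would carry out the one line of algebra that converts this into the stated form. Using $\pi_0^{-1/d}=(\pi_0^{(d-1)/d})^{\frac{1}{1-d}}$ the scalar prefactor can be absorbed inside the $\tfrac{1}{1-d}$-power, giving
\begin{equation*}
\bar F(w)=\Bigl(\pi_0^{(d-1)/d}\bigl(\rho\pi_0^{1/d}+(\rho^{1-d}\pi_0^{(1-d)/d}-\rho\pi_0^{1/d})e^{(d-1)w}\bigr)\Bigr)^{\frac{1}{1-d}}.
\end{equation*}
Distributing the $\pi_0^{(d-1)/d}$ factor, the exponents combine as $(d-1)/d+1/d=1$ on the two $\rho\pi_0^{1/d}$ terms and as $(d-1)/d+(1-d)/d=0$ on the $\rho^{1-d}\pi_0^{(1-d)/d}$ term, yielding exactly \eqref{eq:closed_LLd}.

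Finally I would sanity check the result against the boundary condition $\bar F(0)=\rho$ from \eqref{eq:FPE_LLd}: at $w=0$ the bracket in \eqref{eq:closed_LLd} reduces to $\rho\pi_0+\rho^{1-d}-\rho\pi_0=\rho^{1-d}$, whose $\tfrac{1}{1-d}$-power is $\rho$, confirming consistency. There is no real obstacle here since the heavy lifting is already done by the closed form for the memory-free system \eqref{eq:FbarW_classicSQd_closed} together with the rescaling identity; the only thing to be careful about is tracking the $\pi_0^{1/d}$ factors correctly when pulling the prefactor into the exponent.
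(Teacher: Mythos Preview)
Your proposal is correct and follows exactly the approach the paper intends: the corollary is stated immediately after the rescaling identity $\pi_0^{1/d}\bar F(w)=\bar F_0(w;\rho\pi_0^{1/d})$ with the phrase ``Due to \eqref{eq:FbarW_classicSQd_closed} we therefore have the following corollary'', and no further proof is given. Your write-up simply makes explicit the one line of algebra (absorbing the $\pi_0^{-1/d}$ prefactor into the $\tfrac{1}{1-d}$-power and simplifying the exponents) that the paper leaves to the reader.
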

We are now able to show our main result for a memory dependent LL($d$) policy:
\begin{theorem} \label{thm:response_LLd}
Let $0 < \rho = \lambda \E[G] < 1$  be arbitrary and $R$ the response time of the memory dependent LL($d$) policy with mean job size $\E[G]$ and arrival rate $\lambda$. Further, let $\tilde R$ denote the response time for the same system without memory, but with arrival rate $\lambda \pi_0^{1/d}$, then $R$ and $\tilde R$ have the same distribution
\end{theorem}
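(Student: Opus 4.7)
The plan is to compute the response-time ccdf $\bar F_R(w)$ of the memory-dependent LL($d$) system by conditioning on the dispatching outcome, then to do the same for $\bar F_{\tilde R}(w)$ in the classic (memoryless) LL($d$) system with arrival rate $\lambda\pi_0^{1/d}$, and finally to use the scaling relation $\tilde F(w)=\pi_0^{1/d}\bar F(w)$ derived just before the statement to identify the two expressions. Since the job size $G$ is independent of the selected server's residual workload, the distribution of $R$ is determined by the distribution of the workload seen upon arrival.

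First I would split the incoming job into two cases. With probability $1-\pi_0$ the job is routed to an idle server stored in memory, so its response time is simply $G$, contributing $(1-\pi_0)\bar G(w)$ to $\bar F_R(w)$. With probability $\pi_0$, the dispatcher probes $d$ random queues and the job joins the one with the smallest workload; under the cavity assumption these $d$ workloads are i.i.d.\ with ccdf $\bar F$, so the minimum $W_{(1)}$ has ccdf $\bar F^d$, an atom $1-\rho^d$ at zero, and density $d\bar F(u)^{d-1} f(u)$ for $u>0$. Conditioning on $W_{(1)}$ and using $R=W_{(1)}+G$ gives
\begin{align*}
\bar F_R(w)
&=(1-\pi_0)\bar G(w)+\pi_0\Big[(1-\rho^d)\bar G(w) \\
&\quad +\int_0^w d\bar F(u)^{d-1} f(u)\,\bar G(w-u)\,du+\bar F(w)^d\Big],
\end{align*}
where the last term comes from integrating the density of $W_{(1)}$ over $(w,\infty)$.

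Next I would carry out the same calculation for the memoryless system with arrival rate $\lambda\pi_0^{1/d}$ (so $\pi_0$ is effectively $1$ and $\bar F$ is replaced by $\tilde F$):
\begin{align*}
\bar F_{\tilde R}(w)
&=(1-\tilde F(0)^d)\bar G(w)+\int_0^w d\tilde F(u)^{d-1}\tilde f(u)\,\bar G(w-u)\,du+\tilde F(w)^d.
\end{align*}
Then I would substitute $\tilde F(w)=\pi_0^{1/d}\bar F(w)$, which implies $\tilde f(w)=\pi_0^{1/d}f(w)$, $\tilde F(0)^d=\pi_0\rho^d$, $d\tilde F(u)^{d-1}\tilde f(u)=d\pi_0\bar F(u)^{d-1} f(u)$ and $\tilde F(w)^d=\pi_0\bar F(w)^d$.

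Finally I would match the two expressions term-by-term. The constant-in-$u$ prefactor of $\bar G(w)$ on the memory side combines to $(1-\pi_0)+\pi_0(1-\rho^d)=1-\pi_0\rho^d=1-\tilde F(0)^d$, and the remaining two terms coincide after pulling the factor $\pi_0$ out of the integral and the boundary term. The main obstacle is therefore not analytical but bookkeeping: correctly accounting for the atom at $0$ of $W_{(1)}$ and verifying that the ``memory hit'' contribution $(1-\pi_0)\bar G(w)$ exactly absorbs the discrepancy between $1-\rho^d$ and $1-\pi_0\rho^d$; once that is checked the distributions agree identically, completing the proof.
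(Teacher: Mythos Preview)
Your proposal is correct and follows essentially the same approach as the paper: condition on whether the job is routed from memory or via probing, write the response time as the minimum probed workload plus an independent job size, and use the scaling relation $\tilde F=\pi_0^{1/d}\bar F$ to identify the two ccdf's. The only cosmetic difference is that you condition on $W_{(1)}$ (yielding the three-term expression with the atom, the density integral, and the tail $\bar F(w)^d$), whereas the paper conditions on $G$ to obtain the more compact form $\bar G(w)+\int_0^w \bar F(w-u)^d g(u)\,du$; the two are equivalent by Fubini.
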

\begin{proof}
Let $\bar F(w)$ and $\bar H(w)$ be the ccdf of the workload for the system with and without
memory, respectively. We have $\bar F(w)\pi_0^{1/d}=\bar H(w)$ which yields:
\begin{align*}
\bar F_{R}(w)
&= (1-\pi_0) \bar G(w) + \pi_0 \bigg[ \int_0^w \bar F(w-u)^d g(u) \, du  + \bar G(w)  \bigg]\\
&= \bar G(w) + \int_0^w \bar H(w-u)^d g(u) \, du,
\end{align*}
which can easily be seen to be equal to $\bar F_{\tilde R}(w)$.
\end{proof}

By using the results in this section, one can easily generalise many of the results presented in \cite{hellemans2018power} including an analytical proof that LL($d$) outperforms SQ($d$) and closed form solutions for the response time distribution, mean response time and mean workload.

\begin{proposition}\label{prop:pi_0_JIQ_LLd}
For the LL($d$) policy with the ISM memory scheme presented in Section \ref{sec:example_JIQ} we have $$\pi_0=\frac{1-(1-\rho^d)^{\frac{1}{A+1}}}{\rho^d}$$ for any job size distribution.
\end{proposition}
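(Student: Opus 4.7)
The plan is to mirror exactly the argument used for Proposition \ref{prop:pi_0_JIQ_SQd_gen}, replacing the role of $x_1(0^+)$ (the density of queues with one job and vanishing residual work) by $f(0^+)$, which in the LL($d$) setup is the density of the workload distribution at $w \to 0^+$ and which serves as the per-queue rate at which servers become idle and hence trigger an ISM message to the dispatcher.

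The first step is to extract the identity $f(0^+) = \lambda(1-\pi_0 \rho^d)$ from the equilibrium version of \eqref{eq:PIDE2}. Letting $t \to \infty$ in \eqref{eq:PIDE2} and substituting $\bar F(0) = \rho$ (which is noted in the excerpt to follow from integrating \eqref{eq:IDE_LLd} and holds for any job size distribution) yields
\begin{equation*}
0 = -f(0^+) + \lambda \pi_0 (1-\rho^d) + \lambda(1-\pi_0),
\end{equation*}
which rearranges immediately to $f(0^+) = \lambda(1-\pi_0\rho^d)$. This is the LL($d$) analogue of the relation $x_1(0^+) = \lambda(1-\pi_0 u_1^d)$ used in the SQ($d$) proof, and is the only place where information about LL($d$) (as opposed to SQ($d$)) enters the argument.

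Next, I would argue that the memory state evolves as a birth–death chain on $\{0, 1, \ldots, A\}$ with birth rate $f(0^+)$ (one id is added whenever a server becomes idle, discarded if the memory is full) and death rate $\lambda$ (one id is consumed per arrival while the memory is non-empty). The ratio of birth to death rates equals $1-\pi_0\rho^d$, so the invariant distribution gives
\begin{equation*}
\pi_0 = \frac{1}{\sum_{i=0}^A (1-\pi_0\rho^d)^i} = \frac{\pi_0 \rho^d}{1-(1-\pi_0\rho^d)^{A+1}}.
\end{equation*}
From here the self-consistent equation simplifies to $(1-\pi_0\rho^d)^{A+1} = 1-\rho^d$, and extracting $\pi_0$ yields the stated closed form, with the additional observation that $\pi_0$ is insensitive to the job size distribution because the only ingredients used, $\bar F(0)=\rho$ and \eqref{eq:PIDE2}, hold for any $G$.

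The only potential obstacle is the derivation of $f(0^+) = \lambda(1-\pi_0\rho^d)$, which is essentially a bookkeeping exercise in \eqref{eq:PIDE2} once $\bar F(0)=\rho$ is in hand; every subsequent step is then identical in form to the SQ($d$) calculation in Proposition \ref{prop:pi_0_JIQ_SQd_gen}.
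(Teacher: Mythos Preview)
Your proposal is correct and follows essentially the same approach as the paper. The only cosmetic difference is that you extract $f(0^+)=\lambda(1-\pi_0\rho^d)$ from the stationary limit of \eqref{eq:PIDE2}, whereas the paper obtains it directly by evaluating \eqref{eq:IDE_LLd} at $w=0$ (noting $f(0)=-\bar F'(0)$); the birth--death analysis and the reduction to the argument of Proposition~\ref{prop:pi_0_JIQ_SQd_gen} are then identical.
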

\begin{proof}
The rate at which servers send probes is equal to $f(0)=-\bar F'(0)$ and it follows from \eqref{eq:IDE_LLd} that $f(0) = \lambda(1-\pi_0 \rho^d)$. The memory state therefore evolves as a birth-death process
with birth rate $\lambda(1-\pi_0 \rho^d)$ and death rate $\lambda$. The remainder of the proof
is therefore identical to the proof of Proposition \ref{prop:pi_0_JIQ_SQd_gen}.
\end{proof}

\begin{center}
\begin{table*}[]
\begin{tabular}{c|ccccc}
      Setup  & $N=10$ & $N=20$ & $N=50$ & $N=100$ & $N=200$ \\ \hline
 $1$ & 1.8839 & 1.5363 & 1.3556 & 1.3059  & 1.2832       \\
 $2$ & 1.4533 & 1.3119 & 1.2313 & 1.2045  & 1.1926       \\
 $3$ & 1.5906 & 1.3860 & 1.2787 & 1.2399  & 1.2215       \\
 $4$ & 1.9086 & 1.3981 & 1.1643 & 1.1158  & 1.0999       \\
 $5$ & 2.3918 & 2.0132 & 1.8200 & 1.7733  & 1.7407       \\
 $6$ & 1.7583 & 1.5920 & 1.4943 & 1.4578  & 1.4404       \\
$7$ & 2.0504 & 1.8040 & 1.6643 & 1.6161  & 1.5901       \\
 $8$ & 2.2790 & 1.5924 & 1.2950 & 1.2352  & 1.2186 
\end{tabular}
\begin{tabular}{c|cccc}
      Setup  & $N=500$ & $N=1000$ & $N=3000$ & Cavity Method \\ \hline
 $1$  & 1.2683  & 1.2638   & 1.2574   & 1.2583       \\
 $2$  & 1.1836  & 1.1810   & 1.1794   & 1.1787       \\
 $3$  & 1.2110  & 1.2097   & 1.2068   & 1.2058       \\
 $4$  & 1.0928  & 1.0921   & 1.0896   & 1.0888       \\
 $5$  & 1.7252  & 1.7178   & 1.7146   & 1.7138       \\
 $6$  & 1.4314  & 1.4304   & 1.4257   & 1.4256       \\
 $7$  & 1.5753  & 1.5736   & 1.5667   & 1.5660       \\
 $8$  & 1.2097  & 1.2096   & 1.2070   & 1.2056 
\end{tabular}
\caption{Comparison of mean response time for the finite system and the cavity method.} \label{table:fin_acc}
\end{table*}
\end{center}

\section{Finite System Accuracy}\label{sec:finAcc}

The results presented in Section \ref{sec:analysis} all focused on the cavity process
of SQ($d$) and LL($d$) with memory. In Table \ref{table:fin_acc} we present simulation results which illustrate that the stationary mean response time in a finite stochastic system consisting of $N$
servers converges to the mean response time obtained using the cavity method.
We simulated a system with $N=10,20,50,100,200,500,1000$ and $3000$ servers. The arrival rate equaled $\lambda N$, the runtime was set to $10^6/N$ and we used a warm-up period equal to a third of the runtime. Job sizes have mean one and are either exponential or hyperexponential with balanced means and a Squared Coefficient of Variation (SCV) equal to $2$ or $3$. 

The following $8$ arbitrarily chosen settings have been considered:
\newline
\textbf{Setup $1$} : LL($4$), $\lambda =0.9$, exponential job sizes and the IP memory scheme.
\newline
\textbf{Setup $2$} : LL($3$), $\lambda=0.8$, exponential job sizes and the CP memory scheme
(meaning memory is of infinite size).
\newline
\textbf{Setup $3$} : LL($3$), $\lambda=0.8$, hyperexponential job sizes with SCV equal to $2$ and BCP memory scheme with $A=5$.
\newline
\textbf{Setup $4$} : LL($2$), $\lambda=0.85$, hyperexponential job sizes with SCV equal to $3$ and the ISM memory scheme with $A=10$.
\newline
Setups $5$ through $8$ are the same as $1$ through $4$, but using SQ($d$) rather than LL($d$).
In all cases the mean response time appears to converge towards the response time of the cavity method.
Note that in the last two setups we are considering SQ($d$) with memory and hyperexponential
job sizes. In this case the response time of the cavity method is simply computed as the
response time in the same system without memory, but with arrival rate $\lambda \pi_0^{1/d}$. %This illustrates that the results for SQ($d$) are indeed not limited to exponential job sizes.

\begin{figure*}[t]
\begin{subfigure}{.5\textwidth}
\centering
\captionsetup{width=.8\linewidth}
\includegraphics[width=0.9\textwidth]{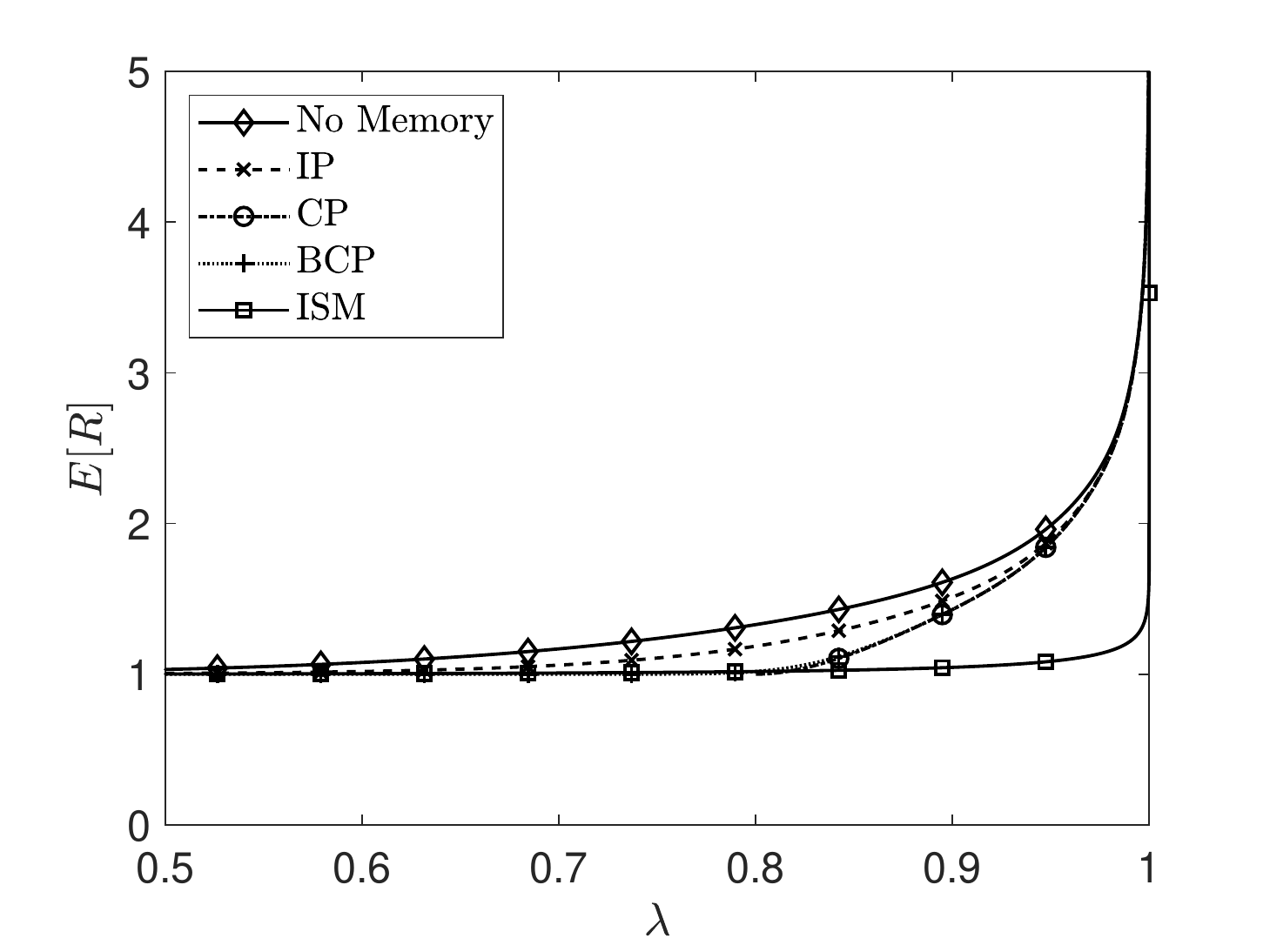}
\subcaption{Mean response time.}
\label{fig:num_exp_ER}
\end{subfigure}
\begin{subfigure}{.5\textwidth}
\centering
\captionsetup{width=.8\linewidth}
\includegraphics[width=0.9\textwidth]{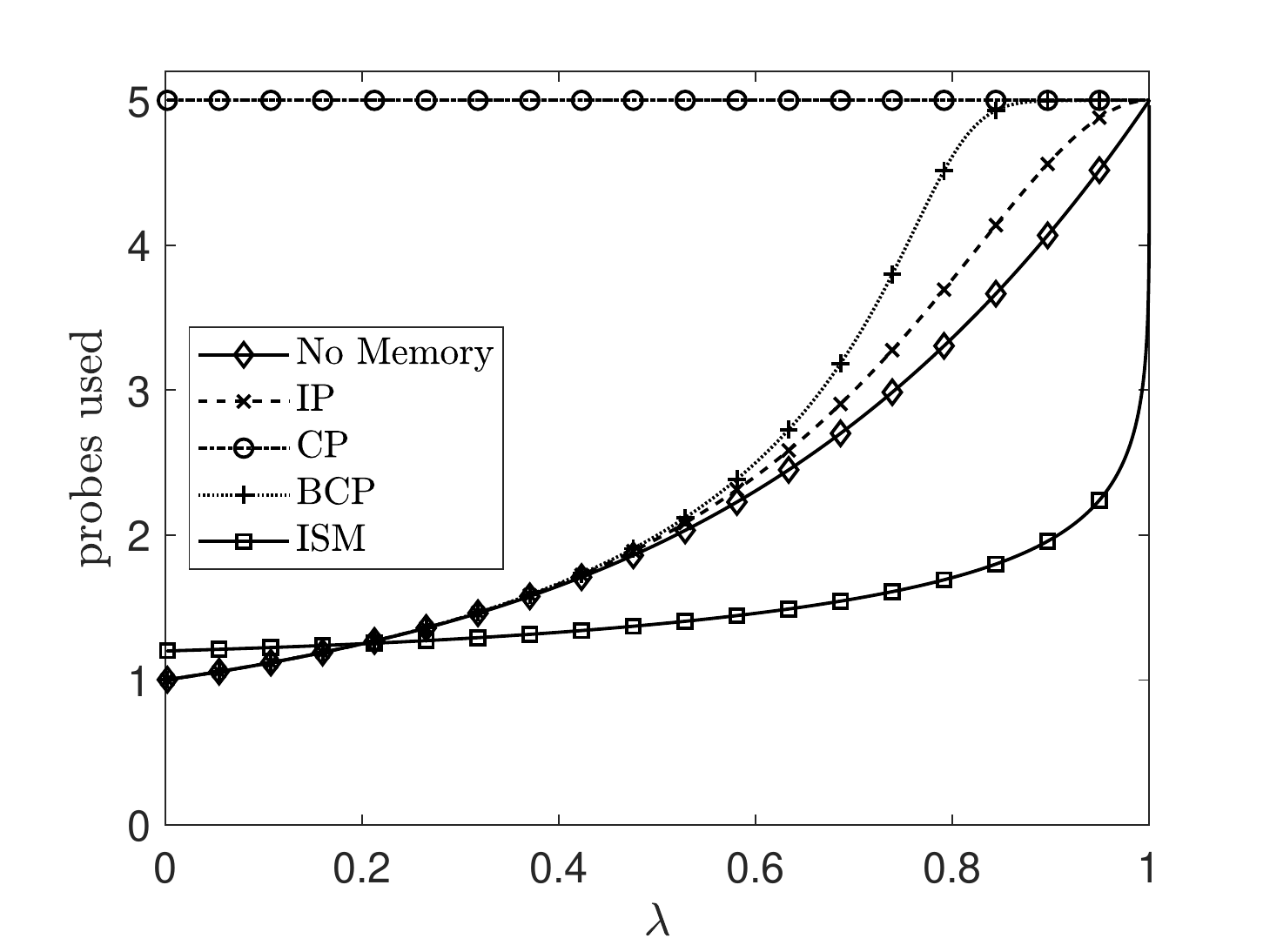}
\subcaption{Number of probes used per arrival.}
\label{fig:num_probes}
\end{subfigure}
\begin{subfigure}{.5\textwidth}
\centering
\captionsetup{width=.8\linewidth}
\includegraphics[width=0.9\textwidth]{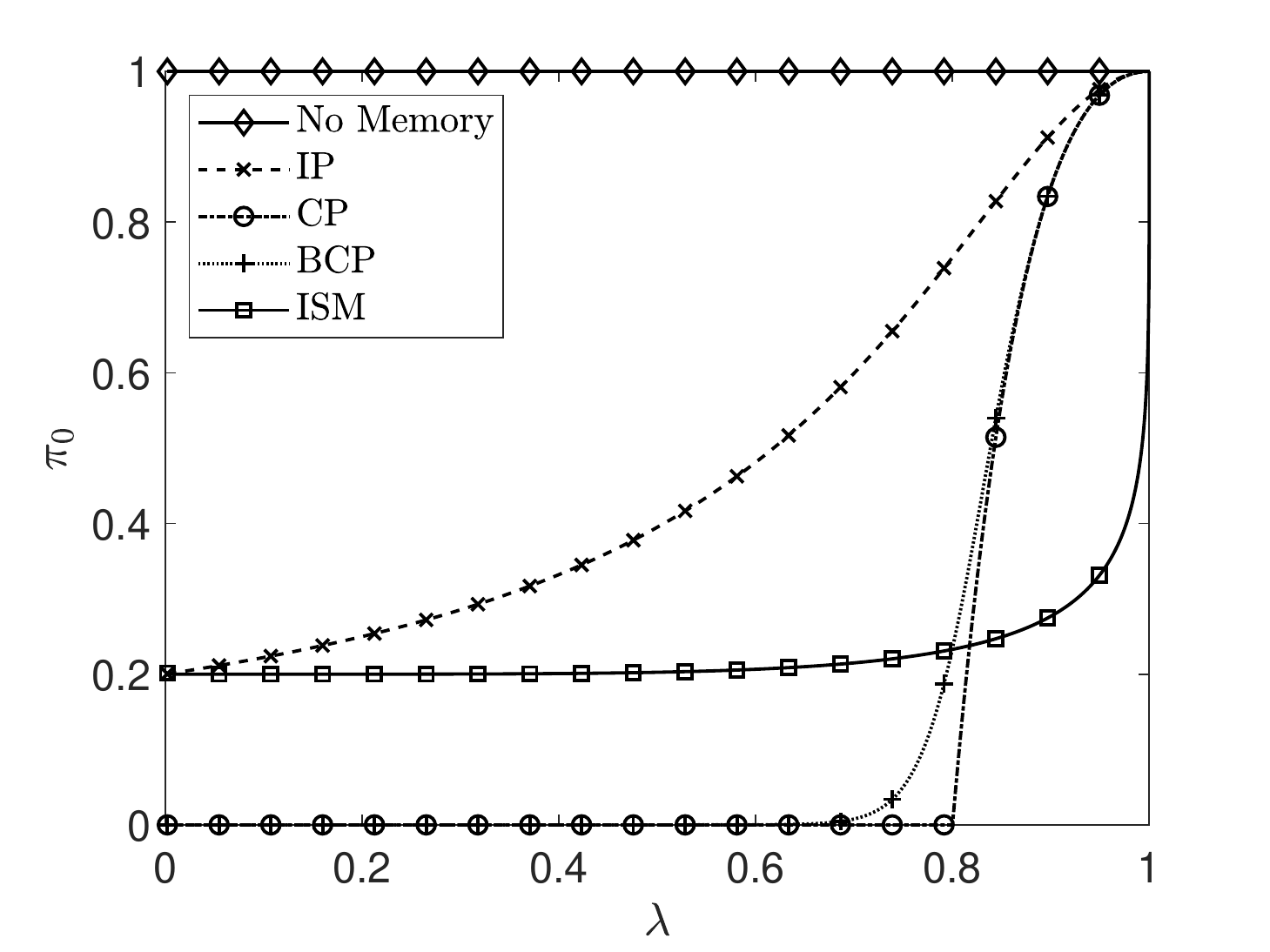}
\subcaption{Prob.~of having empty memory.}
\label{fig:num_exp_pi0}
\end{subfigure}

\caption{Performance of the different memory schemes for SQ($5$) with exponential job sizes with mean one.}
\label{fig:num_exp_SQd}
\end{figure*}

\section{Numerical Example} \label{sec:num_examples}

In this section we briefly demonstrate the type of numerical results that can be obtained using
our findings. This section is not intended as a detailed comparison of the different memory
schemes presented in Section \ref{sec:examples}. 

Figure \ref{fig:num_exp_SQd} focuses on the SQ($5$) policy with exponential job sizes with mean  one and a memory size $A$ of $4$ (except for CP). For the BCP and ISM memory schemes the dispatcher 
is assumed to send its $d$ probes one at a time (if memory is empty upon a job arrival) 
and stops probing as soon as an idle server is found. This is also the case for the setting
without memory (labeled {\it No memory}). For the CP memory scheme we assume that the dispatcher
has infinite memory. 
We plot the mean response times, the probabity of having empty memory $\pi_0$ and
the average number of probes/messages used per job arrival.

In Figure \ref{fig:num_exp_ER} we see that the mean response time is nearly optimal for all schemes
when the load is low (say below $0.5$). For higher loads we see that the ISM scheme is
the best, followed by the CP/BCP, IP and No Memory scheme. The ISM scheme is especially powerful
when the load is close to one as all the other schemes use probing and probes are highly unlikely
to locate an idle server. The results of CP and BCP are very close to each other, which indicates that
a very small amount of memory may suffice.  

In Figure \ref{fig:num_probes} we depict the average number of probes that each of these memory schemes use. If we look at the results for the No Memory, CP/BCP and IP scheme, we see that the schemes
that achieved a lower mean response time use more probes. In this particular case the BCP scheme 
may appear to be superior to CP as it has a similar response time and uses far less probes, but keep
in mind that probes are transmitted one at a time by BCP, while CP can transmit the $d$ probes at
once (which is faster). Looking at both the mean response time and number of probes/messages used, the ISM scheme is clearly best in this case. 

%In Figure \ref{fig:num_exp_ER} we see that there is not a lot to be gained w.r.t.~response time when the system load is low. For all memory schemes except for the one considered in Section \ref{sec:example_JIQ} we observe that the advantage w.r.t.~$\E[R]$ disappears again as the system load gets closer to one. This makes sense (as also seen in Figure \ref{fig:num_exp_pi0}) as for these schemes, empty servers are only put into memory by probing servers, but as the system load increases, the probability of finding more than one empty server when probing becomes very small. However for the memory scheme introduced in \ref{sec:example_JIQ} we observe that the gain in response time is substantial even for $\lambda$ very close to one, even with a memory size of only $20$ the behaviour is already very close to JIQ. We observe that memory schemes $3$ and $4$ are nearly identical where scheme $4$ is the same as scheme $3$ but with a finite memory. Thus reducing memory from infinite to finite does not change much in the performance of these load balancing policies.

In Figure \ref{fig:num_exp_pi0} we look at the probability of having an empty memory when a job
arrives.  For the IP scheme and a load $\lambda \approx 0$, the dispatcher almost always discovers 
$5$ idle servers and therefore $\pi_0$ is close to $1/5$. For (B)CP we note that as long
as the load is sufficiently low (that is, $5 < \frac{1}{1-\lambda}$ or equivalently $\lambda<4/5$), we have $\pi_0\approx 0$, but for larger $\lambda$ values it sharply increases to one. When $\lambda \approx 4/5$ we also see the most significant gain in response time for (B)CP (see Figure \ref{fig:num_exp_ER}). For the ISM memory scheme, we observe that when $\lambda$ is sufficiently small: 
$$
\pi_0\approx \frac{1}{5} = \frac{1}{A+1} = \lim_{\rho \rightarrow 0^+} \frac{1-(1-\rho^d)^{\frac{1}{A+1}}}{\rho^d},
$$
which is independent of $d$. Only when $\lambda$ is close to one, $\pi_0$ starts a very steep climb to one.
\section{Mean Field Limit Under Heavy Traffic}\label{sec:heavy}
Throughout this section and Section \ref{sec:low}, we assume job sizes are exponential with mean equal to one. The assumption that the mean equals one is merely a technicality to ease notation. Our goal is to compute the limit:
\begin{equation}\label{eq:heavy_traffic_mem}
\lim_{\lambda \rightarrow 1^-} -\frac{\E[R_\lambda]}{\log(1-\lambda)},
\end{equation}
where $R_{\lambda}$ denotes the response time for some SQ($d$)/LL($d$) memory based load balancing policy. To this end, we employ the framework developed in \cite{hellemans2020heavy}. Note that this limit gives an indication of the performance of the load balancing policy under a high load. Moreover, it is easy to see that this limit remains unchanged if we swap the mean response time by either the mean waiting time or the mean queue length/workload.
To emphasize that $\pi_0$ depends on $\lambda$, we denote $\pi_0$ as $\pi_0(\lambda)$
in this section. Define
\begin{equation}\label{eq:T_lam_mem}
T_{\lambda}(x) = \lambda \pi_0(\lambda) x^d,
\end{equation}
and note that $(u_k)_k$ for SQ($d$) resp.~$\bar F(w)$ for LL($d$) satisfy the relations $u_{k+1} = T_{\lambda}(u_k)$ resp.~$\bar F'(w)=T_{\lambda}(\bar F(w)) - \bar F(w)$. 
%We show how to use the result in \textcolor{red}{REFERENTIE TOEVOEGEN} for the memory dependent SQ($d$) and LL($d$) policy.
\begin{theorem} \label{thm:heavy_traffic_mem_Dpi0_finite}
For the memory dependent SQ($d$) policy, provided that $\lim_{\lambda \rightarrow 1^-} \pi_0'(\lambda) < \infty$ and $\lim_{\lambda \rightarrow 1^-} \pi_0(\lambda)=1$, we obtain the heavy traffic limit:
\begin{equation}\label{eq:heavy_traffic_mem_SQd}
\lim_{\lambda \rightarrow 1^-} -\frac{\E[R_\lambda^{(SQ)}]}{\log(1-\lambda)}=\frac{1}{\log(d)},
\end{equation}
while for the LL($d$) variant we have:
\begin{equation}\label{eq:heavy_traffic_mem_SQd}
\lim_{\lambda \rightarrow 1^-} -\frac{\E[R_\lambda^{(LL)}]}{\log(1-\lambda)}=\frac{1}{d-1}.
\end{equation}
\begin{proof}
We validate the requirements (a)--(g) of \cite{hellemans2020heavy} from which the result directly follows.
\begin{enumerate}[label=(\alph*), leftmargin=*]
\item In this step we should show there exists some continuous function $u_{\cdot} : \lambda \rightarrow u_\lambda$ such that $u_\lambda \in (1,\infty)$, $T_\lambda(u_\lambda)= u_{\lambda}$ and $\lim_{\lambda \rightarrow 1^-} u_\lambda = 1$. For our model, it is not hard to find an explicit formula for $u_\lambda=T_\lambda(u_\lambda)$, namely :
$$
u_\lambda=(\lambda \pi_0(\lambda))^{1/(1-d)}.
$$
\item One trivially verifies that $T_\lambda(0)=0$, and for any $u\in (0,1)$ we have $T_\lambda(u) < u$ and $\lim_{\lambda \rightarrow 1^-} \frac{T_\lambda(u)}{u} < 1$.
\item We define $h_\lambda(x) = \frac{u_\lambda-T_\lambda(u_\lambda-x)}{x}$, we now verify that $h_\lambda'(x) < 0$ for any $x\in [u_\lambda-1, u_\lambda]$. To this end, we first compute:
\begin{align*}
h_\lambda'(x)
&= \frac{\lambda \pi_0(\lambda) (u_\lambda - x)^d - u_\lambda + \lambda \pi_0(\lambda) d x (u_\lambda - x)^{d-1}}{x^2}.
\end{align*}
In case $d=1$ this expression simplifies to $-\frac{u_\lambda}{x^2} (1-\lambda \pi_0(\lambda)) < 0$. For $d\geq 2$ we compute the derivative of $(x^2 \cdot h_\lambda'(x))$:
$$
(x^2 h_\lambda'(x))' = - \lambda \pi_0(\lambda) d (d-1) x (u_\lambda - x)^{d-2},
$$
which is negative. As one easily verifies that $(x^2 h_\lambda'(x))$ equals zero in $x=0$, this indeed shows $h_\lambda$ is decreasing on $[u_\lambda-1, u_\lambda]$.
\item This is a technicality which is automatically satisfied because $h_\lambda$ is decreasing on $[u_\lambda-1, u_\lambda]$, which we showed in the previous step.
\item For this step we need to compute the value of:
\begin{align}
A
&=\lim_{\lambda \rightarrow 1^-} h_\lambda(u_\lambda - 1)=\lim_{\lambda \rightarrow 1^-} \frac{u_\lambda - \lambda \pi_0(\lambda)}{u_\lambda - 1}=\lim_{\lambda \rightarrow 1^-} \frac{u_\lambda' - \pi_0(\lambda) - \lambda \pi_0'(\lambda)}{u_\lambda'}.\label{eq:proof_heavy_traffic_mem_Dpi0_finite}
\end{align}
It is not hard to see that:
\begin{align*}
u_\lambda'
&= \frac{1}{1-d} (\lambda \pi_0(\lambda))^{-d/(d-1)} \cdot \left(\pi_0(\lambda) + \lambda \pi_0'(\lambda) \right).
\end{align*}
Using the fact that $\lim_{\lambda \rightarrow 1^-} \pi_0(\lambda) = 1$, we obtain (continuing from \eqref{eq:proof_heavy_traffic_mem_Dpi0_finite}):
\begin{align*}
A
&= \lim_{\lambda\rightarrow 1^-} \frac{\frac{1+ \pi_0'(\lambda)}{1-d} - 1 - \pi_0'(\lambda)}{\frac{1+ \pi_0'(\lambda)}{1-d}} = d.
\end{align*}
\item For this this step, we need to compute the value \label{item:probleempunt_heavy_mem}
\begin{align*}
B&=\lim_{\lambda \rightarrow 1^-} \frac{\log(u_\lambda - 1)}{\log(1-\lambda)}= \lim_{\lambda \rightarrow 1^-} - \frac{(1-\lambda) u_\lambda'}{u_\lambda-1},
\end{align*}
at this point, we use the assumption that $\lim_{\lambda\rightarrow 1^-} \pi_0'(\lambda) < \infty$, as this implies that $u_\lambda' < \infty$, allowing us to use l'Hopital only on $\frac{1-\lambda}{u_\lambda-1}$, by which it trivially follows that $B=1$.
\item For the last step, we should verify that $\lim_{\varepsilon \rightarrow 0^+} \lim_{\lambda \rightarrow 1^-} h_\lambda(\varepsilon)=A$. Indeed,
\begin{align*}
\lim_{\varepsilon \rightarrow 0^+} \lim_{\lambda \rightarrow 1^-} h_\lambda (\varepsilon)
= \lim_{\varepsilon\rightarrow 0^+} \frac{1-(1-\varepsilon)^d}{\varepsilon} = d = A.
\end{align*}
\end{enumerate}
\end{proof}
\end{theorem}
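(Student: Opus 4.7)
The plan is to reduce to the classical (memoryless) SQ($d$) and LL($d$) policies via Theorems \ref{thm:response_SQd} and \ref{thm:response_LLd}: the response time of the memory-based policy at arrival rate $\lambda$ equals in distribution that of the classical policy at the modified arrival rate $\tilde\lambda(\lambda):=\lambda\pi_0(\lambda)^{1/d}$, so $\E[R_\lambda]=\E[R^{(\mathrm{cl})}_{\tilde\lambda(\lambda)}]$. Applying the framework of \cite{hellemans2020heavy} to the simpler $T_\mu(x)=\mu x^d$ (i.e.\ the $\pi_0\equiv 1$ case) yields the classical heavy-traffic limits, namely $1/\log d$ for SQ($d$) and $1/(d-1)$ for LL($d$), as $\mu\to 1^-$. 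I would therefore write
\[
-\frac{\E[R_\lambda]}{\log(1-\lambda)}=\Biggl(-\frac{\E[R^{(\mathrm{cl})}_{\tilde\lambda(\lambda)}]}{\log(1-\tilde\lambda(\lambda))}\Biggr)\cdot\frac{\log(1-\tilde\lambda(\lambda))}{\log(1-\lambda)}
\]
and show the second factor tends to $1$, which reduces the problem to the two known classical limits.

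First, $\tilde\lambda(\lambda)\to 1^-$ is immediate from $\tilde\lambda(\lambda)\leq\lambda<1$ combined with $\pi_0(\lambda)\to 1$, so the first factor does converge to the classical constant. For the log-ratio, differentiating gives
\[
\tilde\lambda'(\lambda)=\pi_0(\lambda)^{1/d}+\frac{\lambda}{d}\pi_0(\lambda)^{1/d-1}\pi_0'(\lambda)\longrightarrow c:=1+\frac{\pi_0'(1^-)}{d}
\]
as $\lambda\to 1^-$, using $\pi_0(\lambda)\to 1$ and the hypothesis $\pi_0'(1^-)<\infty$. One application of l'Hopital then yields $(1-\tilde\lambda(\lambda))/(1-\lambda)\to c$, so
\[
\frac{\log(1-\tilde\lambda(\lambda))}{\log(1-\lambda)}=1+\frac{\log\bigl((1-\tilde\lambda(\lambda))/(1-\lambda)\bigr)}{\log(1-\lambda)}\longrightarrow 1,
\]
because the numerator of the second term is bounded while the denominator diverges to $-\infty$, provided $c>0$ so that its logarithm is finite.

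The main obstacle is verifying $c>0$. Since $\pi_0(\lambda)\in(0,1]$ with left-limit $1$ at $\lambda=1$, the difference quotient $(1-\pi_0(\lambda))/(1-\lambda)$ is nonnegative for all $\lambda<1$ sufficiently close to $1$, so $\pi_0'(1^-)\geq 0$ whenever the one-sided derivative exists; combined with the finiteness hypothesis this gives $c\in[1,\infty)$, which suffices. The same argument applies verbatim to both SQ($d$) and LL($d$); the only difference is the value of the classical heavy-traffic constant invoked at the end, which depends on how the fixed-point map of the respective classical policy behaves near its boundary fixed point.
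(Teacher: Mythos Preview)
Your argument is correct and takes a genuinely different route from the paper. The paper works directly with the memory-dependent map $T_\lambda(x)=\lambda\pi_0(\lambda)x^d$ and verifies conditions (a)--(g) of the framework in \cite{hellemans2020heavy} one by one: it exhibits the fixed point $u_\lambda=(\lambda\pi_0(\lambda))^{1/(1-d)}$, checks monotonicity of the auxiliary function $h_\lambda$, and computes the two key limits $A=d$ and $B=1$ by differentiating $u_\lambda$ and applying l'H\^opital. You instead invoke Theorems~\ref{thm:response_SQd} and~\ref{thm:response_LLd} to identify $\E[R_\lambda]$ with the classical mean response time at the shifted rate $\tilde\lambda=\lambda\pi_0(\lambda)^{1/d}$, cite the classical heavy-traffic constant from \cite{hellemans2020heavy}, and reduce the remaining work to showing $\log(1-\tilde\lambda)/\log(1-\lambda)\to 1$.

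Your approach is shorter and more conceptual: it isolates the effect of memory entirely in the log-ratio factor and makes transparent why the finiteness of $\pi_0'(1^-)$ is exactly what is needed (so that $(1-\tilde\lambda)/(1-\lambda)$ has a finite positive limit). It also leverages the paper's main structural insight rather than re-deriving heavy-traffic behaviour from scratch. The paper's approach, by contrast, is self-contained in the sense that it does not rely on the equivalence theorems, and it illustrates that the framework of \cite{hellemans2020heavy} applies directly to the perturbed map. One small point in your favour: you explicitly justify $c\ge 1$ via the sign of the difference quotient $(1-\pi_0(\lambda))/(1-\lambda)$, whereas the paper's computation of $A$ in step~(e) tacitly requires $1+\pi_0'(1^-)\ne 0$ for the same reason but does not comment on it.
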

It is not hard to see that Theorem \ref{thm:heavy_traffic_mem_Dpi0_finite} applies to all policies described in Section \ref{sec:examples}, except the ISM policy which we discussed in Section \ref{sec:example_JIQ}. Indeed, ISM is the only policy for which $\lim_{\lambda \rightarrow 1^-} \pi_0'(\lambda) = \infty$, see also Figure \ref{fig:num_exp_pi0}. We therefore find the heavy traffic limit to be slightly different in case of ISM.
\begin{theorem}
For the memory dependent SQ($d$) policy with ISM and a memory size equal to $A$, we obtain the heavy traffic limit:
\begin{equation}\label{eq:heavy_traffic_mem_SQd}
\lim_{\lambda \rightarrow 1^-} -\frac{\E[R_\lambda^{(SQ)}]}{\log(1-\lambda)}=\frac{1}{A+1}\frac{1}{\log(d)},
\end{equation}
while for the LL($d$) variant we have:
\begin{equation}\label{eq:heavy_traffic_mem_SQd}
\lim_{\lambda \rightarrow 1^-} -\frac{\E[R_\lambda^{(LL)}]}{\log(1-\lambda)}=\frac{1}{A+1}\frac{1}{d-1}.
\end{equation}
\end{theorem}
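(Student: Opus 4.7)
My plan is to re-run the argument of Theorem~\ref{thm:heavy_traffic_mem_Dpi0_finite}, which verifies conditions (a)--(g) of the framework in \cite{hellemans2020heavy} with $T_\lambda(x) = \lambda\pi_0(\lambda) x^d$, but now plugging in the ISM expression $\pi_0(\lambda) = (1-(1-\lambda^d)^{1/(A+1)})/\lambda^d$ from Propositions~\ref{prop:pi_0_JIQ_SQd_gen} and~\ref{prop:pi_0_JIQ_LLd}. Since $\pi_0(\lambda)\to 1$ as $\lambda\to 1^-$ in the ISM case as well, the verifications of steps (a)--(e) and (g) from the previous proof use only this limit together with the algebraic form of $T_\lambda$, so they carry over verbatim; in particular one still obtains $u_\lambda=(\lambda\pi_0(\lambda))^{1/(1-d)}$ and the framework constant produced in step (e) is still $d$.

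The only step whose numerical outcome differs is (f), which computes $B=\lim_{\lambda\to 1^-}\log(u_\lambda-1)/\log(1-\lambda)$. Because $\pi_0'(\lambda)\to\infty$ under ISM, the l'Hopital shortcut used earlier is no longer valid, and I would instead derive the asymptotics of $\pi_0$ directly. Using $1-\lambda^d \sim d(1-\lambda)$, a short expansion gives
\begin{equation*}
1-\pi_0(\lambda)\sim (d(1-\lambda))^{1/(A+1)} \qquad \text{as } \lambda\to 1^-,
\end{equation*}
and hence $1-\lambda\pi_0(\lambda)$ shares the same asymptotic, since the linear-in-$(1-\lambda)$ contribution from the factor $\lambda$ is of strictly higher order whenever $A\geq 1$. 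Expanding $q\mapsto q^{1/(1-d)}$ to first order at $q=1$ then yields $u_\lambda-1\sim (d(1-\lambda))^{1/(A+1)}/(d-1)$, from which
\begin{equation*}
B=\lim_{\lambda\to 1^-}\frac{\log(u_\lambda-1)}{\log(1-\lambda)}=\frac{1}{A+1}.
\end{equation*}

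Substituting $B=1/(A+1)$ together with the step-(e) constant $d$ back into the framework of \cite{hellemans2020heavy} produces the SQ($d$) limit $B/\log(d)=1/((A+1)\log(d))$ and, by the same reasoning used for LL($d$) in the previous theorem, the LL($d$) limit $B/(d-1)=1/((A+1)(d-1))$, matching the statement. The most delicate point is the asymptotic identification of $1-\pi_0(\lambda)$: one must check that $(1-\lambda^d)^{1/(A+1)}/\lambda^d$ and $(d(1-\lambda))^{1/(A+1)}$ agree to leading order and that in $1-\lambda\pi_0(\lambda)$ the fractional-power term dominates the linear term, both of which reduce to the inequality $1/(A+1)<1$, i.e.~$A\geq 1$, which is implicit in having a memory slot at all.
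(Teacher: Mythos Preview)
Your proof is correct and follows the same overall strategy as the paper: reuse the verification of conditions (a)--(e) and (g) from Theorem~\ref{thm:heavy_traffic_mem_Dpi0_finite} (which only need $\pi_0(\lambda)\to 1$ and the algebraic form of $T_\lambda$), and redo step~(f) to obtain $B=1/(A+1)$. The only difference is in how step~(f) is carried out. The paper computes $u_\lambda'$ explicitly from the ISM formula, writes $B=\lim_{\lambda\to 1^-}(1-\lambda)u_\lambda'/(1-u_\lambda)$, and then works through a chain of limits using the substitution $\xi=(1-\lambda^d)^{1/(A+1)}$ together with several applications of l'H\^opital's rule. Your route via the asymptotic expansion $1-\lambda\pi_0(\lambda)\sim(d(1-\lambda))^{1/(A+1)}$ and a first-order expansion of $q\mapsto q^{1/(1-d)}$ at $q=1$ is shorter and avoids differentiating $\pi_0$ altogether; it makes transparent exactly why the exponent $1/(A+1)$ appears. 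Both arguments yield the same $B$, and the final substitution into the framework of \cite{hellemans2020heavy} is identical.
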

\begin{proof}
One can copy the proof of Theorem \ref{thm:heavy_traffic_mem_Dpi0_finite}, except for step \ref{item:probleempunt_heavy_mem}, as it was used that $\lim_{\lambda \rightarrow 1^-} \pi_0'(\lambda) < \infty$, while it is easy to verify that this limit is indeed infinite for ISM. Let us first compute $u_\lambda'$, using \eqref{eq:pi_0_JIQ} we find:
\begin{align*}
u_\lambda'
&= \left[(\lambda \pi_0(\lambda))^{1/(1-d)} \right] '\\
&= \left( \frac{\lambda}{(1-(1-\lambda^d)^{1/(A+1)})^{1/(d-1)}}\right)'\\
&= \frac{(d-1)(A+1) - d \lambda ^d (1-(1-\lambda^d)^{\frac{1}{A+1}})^{-1} (1-\lambda^d)^{\frac{-A}{A+1}}}{(d-1)(A+1)(1-(1-\lambda^d)^{1/(A+1)})^{1/(d-1)}},
\end{align*}
noting that (by a simple application of l'Hopital's rule) we have $\lim_{\lambda \rightarrow 1^-} \frac{1-\lambda}{1-u_{\lambda}}=0$, we obtain:
\begin{align*}
B
&=\lim_{\lambda \rightarrow 1^-} \frac{(1-\lambda) u_\lambda'}{1-u_\lambda}\\
&= -\lim_{\lambda\rightarrow 1^-} \bigg[ \frac{1-\lambda}{1-u_\lambda} \frac{d \lambda^d}{(d-1)(A+1)} \frac{(1-\lambda^d)^{-A/(A+1)}}{(1-(1-\lambda^d)^{1/(A+1)})^{d/(d-1)}} \bigg]\\
&= -\frac{d}{(d-1)(A+1)} \lim_{\lambda\rightarrow 1^-} \left[ \frac{1-\lambda}{(1-\lambda^d)^{A/(A+1)}(1-u_\lambda)} \right]\\
&=-\frac{d}{(d-1)(A+1)} \lim_{\lambda\rightarrow 1^-} \frac{1-\lambda}{1-\lambda^d} \cdot \lim_{\lambda \rightarrow 1^-} \frac{(1-\lambda^d)^{1/(A+1)}}{1-u_\lambda} \\
&= -\frac{1}{(d-1)(A+1)} \cdot \lim_{\lambda \rightarrow 1^-} \frac{(1-\lambda^d)^{1/(A+1)}}{1-
\frac{\lambda}{\left(1-(1-\lambda^d)^{1/(A+1)}\right)^{1/(d-1)}}}.
\end{align*}
For ease of notation, let us define $\xi = (1-\lambda^d)^{1/(A+1)}$. We find that the above simplifies to :
\begin{align*}
B
&= -\frac{1}{(d-1)(A+1)} \lim_{\xi \rightarrow 0^+} \frac{\xi (1-\xi)^{1/(d-1)}}{(1-\xi)^{1/(d-1)} - (1-\xi^{A+1})^{1/d}} = \frac{1}{A+1},
\end{align*}
where the last equality follows from a final use of l'Hopital's rule. Combining this with the proof of Theorem \ref{thm:heavy_traffic_mem_Dpi0_finite}, we may conclude the proof.
\end{proof}
\section{Mean Field Limit Under Low Traffic}\label{sec:low}
In this section, we investigate the system in the low traffic limit rather than the heavy traffic limit. In particular, we investigate the behaviour of the expected waiting time in the low traffic limit, i.e.~$\lim_{\lambda \rightarrow 1^-} (\E[R_\lambda]-1)$. The value of this limit is always zero but we show that in case of exponential job sizes, we are able to obtain a closed form expression for 
$$
\lim_{\lambda \rightarrow 1^-} \frac{\E[R_\lambda^{(1)}]-1}{\E[R_\lambda^{(2)}]-1}.
$$
Here $R_\lambda^{(1)}, R_\lambda^{(2)}$ denote the response time distribution of two different load balancing policies with arrival rate $\lambda$. We take the quotient of the expected waiting times rather than response times as the quotient for the expected response times is trivially one for any 2 load balancing policies. This quantity signifies the quality of a policy under a low arrival rate.
In particular we have the following result:
\begin{proposition}
Let $R_\lambda^{(i)}$ ($i=1,2$) denote the response time for a memory dependent load balancing policy with probability $\pi_0^{(i)}(\lambda)$ of having an empty memory, using either SQ($d_i$) or LL($d_i$). Furthermore, we assume job sizes are exponentially distributed with mean $1$. We find:
\begin{enumerate}
\item If $d_1 < d_2$, we have 
$$
\lim_{\lambda \rightarrow 0^+} \frac{\E[R_\lambda^{(1)}]-1}{\E[R_\lambda^{(2)}]-1}=\infty.
$$
\item If $d_1=d_2=d$ and both policies use the same strategy (either SQ($d$) or LL($d$)), we have:
$$
\lim_{\lambda \rightarrow 0^+} \frac{\E[R_\lambda^{(1)}]-1}{\E[R_\lambda^{(2)}]-1}= \lim_{\lambda\rightarrow 0^+} \frac{\pi_0^{(1)}(\lambda)}{\pi_0^{(2)}(\lambda)}.
$$
\item If $d_1=d_2=d$ and  $R_\lambda^{(1)}$ employs the SQ($d$) policy while $R_\lambda^{(2)}$ uses the LL($d$) policies, we find:
$$
\lim_{\lambda \rightarrow 0^+} \frac{\E[R_\lambda^{(1)}]-1}{\E[R_\lambda^{(2)}]-1}= \lim_{\lambda\rightarrow 0^+} \frac{\pi_0^{(1)}(\lambda)}{d \pi_0^{(2)}(\lambda)}.
$$ \label{low:item3}
\end{enumerate}
\end{proposition}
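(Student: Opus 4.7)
The plan is to combine the reductions in Theorems \ref{thm:response_SQd} and \ref{thm:response_LLd} with a Taylor expansion of the classical closed-form mean response times \eqref{eq:ER_SQd_vanilla} and \eqref{eq:ER_LLd_vanilla} around $\lambda = 0$, and then to compare leading-order coefficients.

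First, by Theorem \ref{thm:response_SQd} (for SQ) or Theorem \ref{thm:response_LLd} (for LL), the response time $R_\lambda^{(i)}$ has the same distribution as the response time of the classical (memoryless) version of the same policy at the scaled arrival rate $\tilde\lambda_i := \lambda\, \pi_0^{(i)}(\lambda)^{1/d_i}$. Hence $\E[R_\lambda^{(i)}]$ is obtained from \eqref{eq:ER_SQd_vanilla} or \eqref{eq:ER_LLd_vanilla} by setting $\rho = \tilde\lambda_i$ (using $\mu = 1$).

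Second, I expand these closed forms as $\tilde\lambda \to 0^+$. Term-by-term expansion of the series gives, for the classical SQ($d$) policy,
$$\E[R] - 1 = \tilde\lambda^{d} + \tilde\lambda^{d^2+d} + O(\tilde\lambda^{d^3+d^2+d}),$$
while for the classical LL($d$) policy one obtains
$$\E[R] - 1 = \frac{\tilde\lambda^{d}}{d} + \frac{\tilde\lambda^{2d}}{2d-1} + O(\tilde\lambda^{3d}).$$
Both series are absolutely convergent for $\tilde\lambda < 1$ with successive terms of strictly higher order, so the remainders are rigorous. Substituting $\tilde\lambda_i = \lambda\, \pi_0^{(i)}(\lambda)^{1/d_i}$ yields
$$\E[R^{(i)}_\lambda] - 1 = c_i\, \lambda^{d_i}\, \pi_0^{(i)}(\lambda)\,(1 + o(1)) \quad \text{as } \lambda \to 0^+,$$
with $c_i = 1$ if policy $i$ is SQ($d_i$) and $c_i = 1/d_i$ if it is LL($d_i$).

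Finally, the three cases follow immediately by forming the quotient. In case~1 ($d_1 < d_2$), the ratio behaves like $\lambda^{d_1 - d_2}$ times a bounded nonzero prefactor, so it diverges. In case~2 (same $d$, same policy, so $c_1 = c_2$), the factors $\lambda^d$ and $c_i$ cancel, leaving $\pi_0^{(1)}(\lambda)/\pi_0^{(2)}(\lambda)$. In case~3 (same $d$, one SQ and one LL), only $\lambda^d$ cancels, and the ratio $c_1/c_2$ contributes the additional factor involving $d$ stated in the claim.

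The main obstacle I expect is technical rather than conceptual: one must verify that $\pi_0^{(i)}(\lambda)$ stays bounded away from $0$ and $\infty$ as $\lambda \to 0^+$, so that the $o(1)$ remainders are genuinely negligible after substitution and so that the diverging $\lambda^{d_1 - d_2}$ prefactor in case~1 is not cancelled by vanishing or exploding $\pi_0^{(i)}$. This is straightforwardly checked from \eqref{eq:pi0IP} and \eqref{eq:pi_0_JIQ}, each of which has a strictly positive finite limit as $\rho \to 0$.
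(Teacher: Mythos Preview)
Your approach is essentially the paper's own: reduce via Theorems~\ref{thm:response_SQd} and~\ref{thm:response_LLd} to the memoryless system at rate $\lambda\pi_0^{1/d}$, expand the closed-form series \eqref{eq:ER_SQd_vanilla} and \eqref{eq:ER_LLd_vanilla}, and retain the leading $\tilde\lambda^{d}$ term. One point to make explicit in case~3: with $c_1=1$ and $c_2=1/d$ you obtain $c_1/c_2=d$, so the limit your argument actually produces is $d\,\pi_0^{(1)}(\lambda)/\pi_0^{(2)}(\lambda)$, which is exactly what the paper's own proof computes as well --- the displayed $\pi_0^{(1)}/(d\,\pi_0^{(2)})$ in the proposition is a typo there.
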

\begin{proof}
From \eqref{eq:ER_SQd_vanilla} with arrival rate $\lambda \pi_0^{1/d}$ and mean job size equal to $1$, one finds that the expected response time for SQ($d$) is given by:
$$
\E[R_\lambda^{(SQ(d))}]-1=\sum_{n=2}^\infty (\lambda\pi_0^{1/d})^{\frac{d^n-1}{d-1}-1}.
$$
By Theorem \ref{thm:response_SQd}, we find that this expression corresponds to the mean response time of a memory based SQ($d$) policy. Analogously, for LL($d$) and using Theorem \ref{thm:response_LLd}, we obtain that the expected response time for a memory based LL($d$) policy with exponential job sizes of mean one is given by:
\begin{equation} \label{eq:proof_low_LLd}
\E[R_\lambda]-1=\sum_{n=1}^\infty \frac{(\lambda\pi_0^{1/d})^{dn}}{1+n(d-1)}.
\end{equation} In order to compute the sought limits, one only retains the terms with the lowest power of $\lambda$. For example, assume $d_1< d_2$ and we wish to compare LL($d_1$) with LL($d_2$), it follows from \eqref{eq:proof_low_LLd} that:
$$
\lim_{\lambda \rightarrow 0^+} \frac{\E[R_\lambda^{(1)}]-1}{\E[R_\lambda^{(2)}]-1}= \frac{\lambda^{d_1} \pi_0}{\lambda^{d_2} \pi_0} \cdot \frac{1+(d_2-1)}{1+(d_1-1)} = \infty.
$$
As a second example let us consider case \eqref{low:item3}, we find:
\begin{align*}
\lim_{\lambda \rightarrow 0^+} \frac{\E[R_\lambda^{(1)}]-1}{\E[R_\lambda^{(2)}]-1}&= \frac{(\lambda\pi_0^{1/d})^d}{\frac{(\lambda\pi_0^{1/d})^d}{d}} =\lim_{\lambda\rightarrow 0^+} \frac{d\pi_0^{(1)}(\lambda)}{\pi_0^{(2)}(\lambda)}.
\end{align*}
\end{proof}
\begin{remark}
For the methods discussed in Section \ref{sec:examples} we can easily compute the limit $\lim_{\rho \rightarrow 0^+} \pi_0(\rho)$. Indeed, by elementary calculus we find:
\begin{itemize}
\item For IP we have $\lim_{\rho \rightarrow 0^+} \pi_0(\rho)=\frac{1}{d}$.
\item For CP and BCP we have $\lim_{\rho \rightarrow 0^+} \pi_0(\rho)=0$.
\item For ISM with memory size $A$ we have $\lim_{\rho \rightarrow 0^+} \pi_0(\rho)=\frac{1}{A+1}$.
\end{itemize}
In particular, we see that, while ISM is the dominant policy in heavy traffic, it does not perform as well in low traffic. See also Figure \ref{fig:num_exp_pi0} for an example.
\end{remark}

\section{Conclusions and Future Work} \label{sec:conclusions}
In this paper we studied the cavity process of the SQ($d$) and LL($d$) load balancing
policies with memory. The main insight provided was that the response time distribution
of the cavity process with memory is identical to the response time distribution 
of the cavity process of the system without memory if the arrival rate is properly set.
This result holds for a large variety of memory schemes including the ones 
presented in Section \ref{sec:examples}. This insight allowed us to analyse the heavy and low traffic limit. Simulation results were presented which suggest that the cavity process corresponds to the exact limit process as the number of servers tends to infinity.

As future work, it may be possible to prove that the cavity process is the proper limit process. For SQ($d$) with exponential job sizes, one can build upon the framework of \cite{benaim2008class}, whilst for LL($d$) it might be possible to extent the framework in \cite{shneer2020large} to prove the ansatz for general job sizes.

\bibliographystyle{ACM-Reference-Format}
\bibliography{thesis}

\end{document}